\numberwithin{equation}{section}
\renewcommand{\d}{\mathbf{d}}
\providecommand{\I}{\mathbf{I}}
\providecommand{\Q}{\mathbf{Q}}
\renewcommand{\r}{\mathbf{r}}
\renewcommand{\v}{\mathbf{v}}
\providecommand{\W}{\mathbf{W}}
\providecommand{\x}{\mathbf{x}}
\providecommand{\X}{\mathbf{X}}
\providecommand{\y}{\mathbf{y}}
\providecommand{\z}{\mathbf{z}}
\providecommand{\zero}{\mathbf{0}}
\providecommand{\lam}{\lambda}
\providecommand{\bb}{\boldsymbol{\beta}}
\providecommand{\bh}{\hat{\beta}}
\providecommand{\bbh}{\widehat{\boldsymbol{\beta}}}
\providecommand{\be}{\boldsymbol{\eta}}
\providecommand{\mh}{\hat{\mu}}
\providecommand{\bL}{\boldsymbol{\Lambda}}
\providecommand{\bp}{\boldsymbol{\pi}}
\providecommand{\tbb}{\tilde{\boldsymbol{\beta}}}
\providecommand{\bbj}{\boldsymbol{\beta}_{-j}}
\providecommand{\rr}{\tilde{\mathbf{r}}}
\providecommand{\Xj}{\mathbf{X}_{-j}}
\providecommand{\XX}{\widetilde{\mathbf{X}}}
\providecommand{\yy}{\tilde{\mathbf{y}}}
\renewcommand{\Pr}{\mathbb{P}}
\providecommand{\Ex}{\mathbb{E}}
\providecommand{\SE}{\textrm{SE}}
\providecommand{\abs}[1]{\left\lvert#1\right\rvert}
\providecommand{\norm}[1]{\left\lVert#1\right\rVert}
\providecommand{\al}[2]{\begin{align}\label{#1}#2\end{align}}
\providecommand{\as}[1]{\begin{align*}#1\end{align*}}
\renewcommand{\abstract}[1]{
 \centerline{
 \begin{minipage}{0.7\linewidth}
 \hrule
 \vskip 0.1in
  \begin{center}
    {\bf Abstract}
  \end{center}
  #1
 \vskip 0.1in
 \hrule
 \end{minipage}}
 \vskip 0.3in} 
\newtheorem{prop}{Proposition}
\newtheorem{lemma}{Lemma}
\title{Group descent algorithms for nonconvex penalized linear and logistic regression models with grouped predictors}
\author{Patrick Breheny\\Department of Biostatistics\\University of Iowa \and Jian Huang\\Department of Statistics and Actuarial Sciences\\Department of Biostatistics\\University of Iowa}
\date{\today}
\begin{document}

\maketitle

\abstract{Penalized regression is an attractive framework for variable selection problems. Often, variables possess a grouping structure, and the relevant selection problem is that of selecting groups, not individual variables.  The group lasso has been proposed as a way of extending the ideas of the lasso to the problem of group selection.  Nonconvex penalties such as SCAD and MCP have been proposed and shown to have several advantages over the lasso; these penalties may also be extended to the group selection problem, giving rise to group SCAD and group MCP methods.  Here, we describe algorithms for fitting these models stably and efficiently.  In addition, we present simulation results and real data examples comparing and contrasting the statistical properties of these methods.}

\section{Introduction}

In regression modeling, explanatory variables can often be thought of as grouped.   To represent a categorical variable, we may introduce a group of indicator functions.  To allow flexible modeling of the effect of a continuous variable, we may introduce a series of basis functions.  Or the variables may simply be grouped because the analyst considers them to be similar in some way, or because a scientific understanding of the problem implies that a group of covariates will have similar effects.

Taking this grouping information into account in the modeling process should improve both the interpretability and the accuracy of the model.  These gains are likely to be particularly important in high-dimensional settings where sparsity and variable selection play important roles in estimation accuracy.

Penalized likelihood methods for coefficient estimation and variable selection have become widespread since the original proposal of the lasso \citep{Tibshirani1996}.  Building off of earlier work by \citet{Bakin1999}, \citet{Yuan2006a} extended the ideas of penalized regression to the problem of grouped covariates.  Rather than penalizing individual covariates, Yuan and Lin proposed penalizing norms of groups of coefficients, and called their method the group lasso.

The group lasso, however, suffers from the same drawbacks as the lasso.  Namely, it is not consistent with respect to variable selection and tends to over-shrink large coefficients. These shortcomings arise because the rate of penalization of the group lasso does not change with the magnitude of the group coefficients, which leads to biased estimates of large coefficients. To compensate for the over-shrinkage, the group lasso tends to include spurious coefficients into the model.

The smoothly clipped absolute deviation (SCAD) penalty and the minimax concave penalty (MCP) were developed in an effort to achieve what the lasso could not: simultaneous selection consistency and asymptotic unbiasedness \citep{Fan2001,Zhang2010}.  This achievement is known as the {\em oracle property}, so named because it implies that the model is asymptotically equivalent to the fit of a maximum likelihood model in which the identities of the truly nonzero coefficients are known in advance.  These properties extend to group SCAD and group MCP models, as shown in \citet{Wang2008a} and \citet{Huang2012}.

However, group SCAD and group MCP have not been widely used or studied in comparison with the group lasso, largely due to a lack of efficient and publicly available algorithms for fitting these models.  Published articles on the group SCAD \citep{Wang2007,Wang2008a} have used a local quadratic approximation for fitting these models.  The local quadratic approximation was originally proposed by \citet{Fan2001} to fit SCAD models.  However, by relying on a quadratic approximation, the approach is incapable of producing naturally sparse estimates, and therefore cannot take advantage of the computational benefits provided by sparsity.  This, combined with the fact that solving the local quadratic approximation problem requires the repeated factorization of large matrices, makes the algorithm very inefficient for fitting large regression problems.  \citet{Zou2008} proposed a local linear approximation for fitting SCAD models and demonstrated its superior efficiency to local quadratic approximations.  This algorithm was further improved upon by \citet{Breheny2011}, who demonstrated how a coordinate descent approach may used to fit SCAD and MCP models in a very efficient manner capable of scaling up to deal with very large problems.

Here, we show how the approach of \citet{Breheny2011} may be extended to fit group SCAD and group MCP models.  We demonstrate that this algorithm is very fast and stable, and we provide a publicly available implementation in the \texttt{grpreg} package, (\url{http://cran.r-project.org/web/packages/grpreg/index.html}).  In addition, we provide examples involving both simulated and real data which demonstrate the potential advantages of group SCAD and group MCP over the group lasso.

\section{Group descent algorithms}

We consider models in which the relationship between the explanatory variables, which consist of $J$ non-overlapping groups, and the outcome is specified in terms of a linear predictor $\be$:
\al{eq:lin-pred}{\be = \beta_0 + \sum_{j=1}^J \X_j\bb_j,}
where $\X_j$ is the portion of the design matrix formed by the predictors in the $j$th group and the vector $\bb_j$ consists of the associated regression coefficients.  Letting $K_j$ denote the number of members in group $j$, $\X_j$ is an $n \times K_j$ matrix with elements $(x_{ijk})$, the value of $k$th covariate in the $j$th group for the $i$th subject.  Covariates that do not belong to any group may be thought of as a group of one.

The problem of interest involves estimating a vector of coefficients $\bb$ using a loss function $L$ which quantifies the discrepancy between $y_i$ and $\eta_i$ combined with a penalty that encourages sparsity and prevents overfitting; specifically, we estimate $\bb$ by minimizing
\al{eq:generic}{Q(\bb) = L(\bb|\y,\X) + \sum_{j=1}^J p_{\lam}\left(\norm{\bb_j}\right),}
where $p(\cdot)$ is a penalty function applied to the Euclidean norm ($L_2$ norm) of $\bb_j$.  The penalty is indexed by a regularization parameter $\lam$, which controls the tradeoff between loss and penalty.  It is not necessary for $\lam$ to be the same for each group; {\em i.e.}, we may consider a collection of regularity parameters $\{\lam_j\}$.  For example, in practice there are often variables known to be related to the outcome and therefore which we do not wish to include in selection or penalization.  The above framework and algorithms which follow may be easily extended to include such variables by setting $\lam_j=0$.

In this section, we primarily focus on linear regression, where $\Ex(\y)=\be$ and $L$ is the least squares loss function, but take up the issue of logistic regression in Section~\ref{Sec:logistic}, where $L$ arises from the binomial likelihood.  For linear regression, we assume without loss of generality that all variables have been centered to have zero mean; in this case, $\bh_0=0$ and may be ignored.

\subsection{Orthonormalization}
\label{Sec:ortho}

The algorithm for fitting group lasso models originally proposed by \citet{Yuan2006a} requires the groups $\X_j$ to be orthonormal, as does the extension to logistic regression proposed in \citet{Meier2008}.  It is somewhat unclear from these papers, however, whether orthonormality is a necessary precondition for applying the group lasso, and if not, how one should go about applying the group lasso to non-orthonormal groups.

This question was explored in a recent work by \citet{Simon2011a}.  In that paper, the authors provide a number of compelling arguments, both theoretical and empirical, that for the group lasso, proper standardization involves orthonormalizing the groups prior to penalizing their $L_2$ norms.  In particular, they demonstrate that the resulting variable selection procedure is closely connected to uniformly most powerful invariant testing.

As we demonstrate in this article, orthonormalizing the groups also produces tremendous advantages in terms of developing algorithms to fit group penalized models.  In addition to greatly reducing the computational burden associated with fitting group lasso models, orthonormalization also leads to straightforward extensions for fitting group SCAD and group MCP models.  Importantly, as we will see, this orthonormalization can be accomplished without loss of generality since the resulting solutions can be transformed back to the original scale after fitting the model.  Thus, it is not necessary for an analyst fitting these group penalization models to have orthonormal groups or to worry about issues of orthonormality when applying these algorithms or using our {\tt grpreg} software.

Taking the singular value decomposition of the Gram matrix of the $j$th group, we have
\as{\frac{1}{n}\X_j^T\X_j = \Q_j\bL_j\Q_j^T,}
where $\bL_j$ is a diagonal matrix containing the eigenvalues of $n^{-1}\X_j^T\X_j$ and $\Q_j$ is an orthonormal matrix of its eigenvectors.  Now, we may construct a linear transformation $\XX_j=\X_j\Q_j\bL_j^{-1/2}$ with the following properties:
\al{eq:ortho1}{\frac{1}{n}\XX_j^T\XX_j &= \I\\
\XX_j\widetilde{\bb}_j &= \X_j(\Q_j\bL_j^{-1/2}\widetilde{\bb}_j)\label{eq:ortho2},}
where $\I$ is the identity matrix and $\widetilde{\bb}_j$ is the solution to \eqref{eq:generic} on the orthonormalized scale.  In other words, if we have the solution to the orthonormalized problem, we may easily transform back to the original problem with $\bb_j = \Q_j\bL_j^{-1/2}\widetilde{\bb}_j$.  This procedure is not terribly expensive from a computational standpoint, as the decompositions are being applied only to the groups, not the entire design matrix, and the inverses are of course trivial to compute because $\bL_j$ is diagonal.  Furthermore, the decompositions need only be computed once initially, not with every iteration.

Note that this procedure may be applied even when $\X_j$ is not full-rank by omitting the zero eigenvalues and their associated eigenvectors.  In this case, $\Q_j$ is a $K_j \times r_j$ matrix and $\bL_j$ is an $r_j \times r_j$ matrix, where $r_j$ denotes the rank of $\X_j$.  Given these modifications, $\bL_j$ is invertible and $\XX_j=\X_j\Q_j\bL_j^{-1/2}$ still possesses properties \eqref{eq:ortho1} and \eqref{eq:ortho2}.  Note, however, that $\XX_j$ now contains only $r_j$ columns and by extension, $\widetilde{\bb}_j$ contains only $r_j$ elements.  Thus, we avoid the problem of incomplete rank by fitting the model in a lower-dimensional parameter space, then transforming back to the original dimensions (note that applying the reverse transformation results in a $\bb_j$ with appropriate dimension $K_j$).  In our experience, it is not uncommon for non-full-rank groups to arise in problems for which one wishes to use group penalized models, and the SVD approach we describe here handles such cases naturally.  For example, in the special case where two members of a group are identical, $\x_{jk}=\x_{j\ell}$, the approach ensures that their coefficients, $\beta_{jk}$ and $\beta_{j\ell}$, will be equal for all values of $\lam$.  

For the remainder of this article, we will assume that this process has been applied and that the model fitting algorithms we describe are being applied to orthonormal groups (by ``orthonormal groups'', we mean groups for which $n^{-1}\X_j^T\X_j=\I$, not that groups $\X_j$ and $\X_k$ are orthogonal to each other).  Consequently, we drop the tildes on $\XX$ and $\widetilde{\bb}$ in subsequent sections.  

It is worth noting that penalizing the norm of $\widetilde{\bb}_j$ is not equivalent to penalizing the norm of the original coefficients.  As pointed out in \citet{Simon2011a} and \citet{Huang2012},
\as{\norm{\widetilde{\bb}_j} &= \sqrt{\frac{1}{n}\bb_j^T\X_j^T\X_j\bb_j}\\
 &\propto \norm{\be_j},}
where $\be_j = \X_j\bb_j$.  In other words, orthonormalizing the groups is equivalent to applying an $L_2$ penalty on the scale of the linear predictor.  The idea of penalizing on the scale of the linear predictors is also explored in \citet{Ravikumar2009}.  The two penalties are equivalent in the non-grouped case, provided that the standard normalization $n^{-1}\sum_i x_{ijk}^2=1$ has been applied.  However, for grouped regression models, this normalization at the coefficient level is inadequate; orthonormalization at the group level is appropriate.

\subsection{Group lasso}
\label{Sec:group-lasso}

In this section, we describe the group lasso and algorithms for fitting group lasso models.  The group lasso estimator, originally proposed by \citet{Yuan2006a}, is defined as the value $\bbh$ that minimizes
\al{eq:gl}{Q(\bb) = \frac{1}{2n}\norm{\y-\X\bb}^2 + \lam\sum_j\sqrt{K_j}\norm{\bb_j}.}
The idea behind the penalty is to apply a lasso penalty to the Euclidean ($L_2$) norm of each group, thereby encouraging sparsity and variable selection at the group level.  The solution $\bbh$ has the property that if group $j$ is selected, then $\beta_{jk} \ne 0$ for all $k$, otherwise $\beta_{jk} = 0$ for all $k$.  The correlation between $\X_j$ and the residuals will tend to be larger if group $j$ has more elements; the presence of the $\sqrt{K_j}$ term in the penalty compensates for this, thereby normalizing across groups of different sizes.  As discussed in \citet{Simon2011a}, this results in variable selection which is roughly equivalent to the uniformly most powerful invariant test for inclusion of the $j$th group.  In what follows, we will absorb the $\sqrt{K_j}$ term into $\lam$ and use $\lam_j = \lam\sqrt{K_j}$.

\citet{Yuan2006a} also propose an algorithm which they base on the ``shooting algorithm'' of \citet{Fu1998}.  Here, we refer to this type of algorithm as a ``group descent'' algorithm.  The idea behind the algorithm is the same as that of coordinate descent algorithms \citep{Friedman2007,Wu2008}, with the modification that the optimization of \eqref{eq:gl} takes place repeatedly with respect to a group $\bb_j$ rather than an individual coordinate $\beta_j$.

Below, we present the group descent algorithm for solving \eqref{eq:gl} to obtain the group lasso estimator.  The algorithm is essentially the same as Yuan and Lin's, although (a) we make explicit its generalization to the case of non-orthonormal groups using the approach described in Section~\ref{Sec:ortho}, (b) we restate the algorithm to more clearly illustrate the connections with coordinate descent algorithms, and (c) we employ techniques developed in the coordinate descent literature to speed up the implementation of the algorithm considerably.  As we will see, this presentation of the algorithm also makes clear how to easily extend it to fit group SCAD and group MCP models in the following sections.

\begin{figure*}
 \centering
 \includegraphics[width=0.7\linewidth]{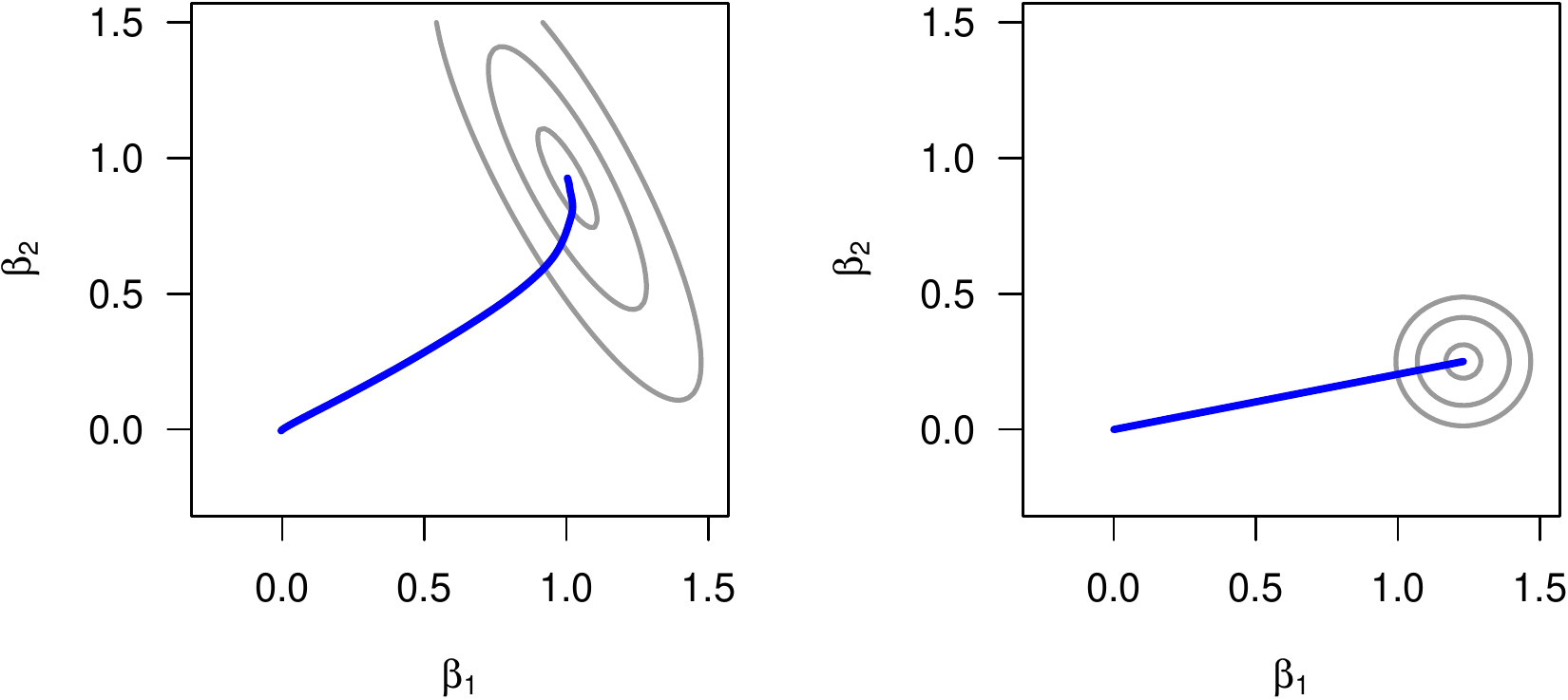}
 \caption{\label{Fig:geom} The impact of orthonormalization on the solution to the group lasso.  Contour lines for the likelihood (least squares) surface are drawn, centered around the OLS solution, as well as the solution path for the group lasso as $\lam$ goes from 0 to $\infty$.  {\em Left}: Non-orthonormal $\X$.  {\em Right}: Orthonormal $\X$.}
\end{figure*}

We begin by noting that the subdifferential \citep{Bertsekas1999} of $Q$ with respect to $\bb_j$ is given by
\al{eq:score}{\partial Q(\bb_j) = \begin{cases}
  -\z_j + \bb_j + \lam_j \bb_j/\norm{\bb_j} & \text{if $\bb_j=\zero$}\\
  -\z_j + \lam_j \v & \text{if $\bb_j=\zero$}\\
\end{cases}}
where $\z_j = \X_j^T(\y-\Xj\bbj)$ is the least squares solution, $\Xj$ is the portion of the design matrix that remains after $\X_j$ has been excluded, $\bbj$ are its associated regression coefficients, and $\v$ is any vector satisfying $\norm{\v} \leq 1$.  The main implication of \eqref{eq:score} is that, by orthonormalizing the groups, $\X_j$ drops out of the equation and the multivariate problem of optimizing with respect to $\bb_j$ is reduced to a univariate problem, as the solution must lie on the line segment joining $\zero$ and $\z_j$.  The geometry of this problem is illustrated in Figure~\ref{Fig:geom}.  As the figure illustrates, orthonormalization renders the direction of $\bbh_j$ invariant with respect to $\lam$, thereby enabling us to break the solution down into two components: determining the direction of $\bb_j$ and determining its length.

Furthermore, determining the length of $\bb_j$ is equivalent to solving a one-dimensional lasso problem, which has a simple, closed-form solution given by the soft-thresholding operator \citep{Donoho1994}:
\al{eq:soft}{
  S(z,\lambda) = \begin{cases}
    z - \lambda & \text{if } z > \lambda \\
    0 & \text{if } \abs{z} \leq \lambda \\
    z + \lambda & \text{if } z < -\lambda \text{.}
  \end{cases}
}
With a slight abuse of notation, we extend this definition to a vector-valued argument $\z$ as follows:
\al{eq:mv-soft}{S(\z,\lam) = S(\norm{z},\lam) \frac{\z}{\norm{\z}},}
where $\z/\norm{\z}$ is the unit vector in the direction of $\z$.  In other words, $S(\z,\lam)$ acts on a vector $\z$ by shortening it towards $\zero$ by an amount $\lam$, and if the length of $\z$ is less than $\lam$, the vector is shortened all the way to $\zero$.

The multivariate soft-thresholding operator is the solution to the single-group group lasso, just as the univariate soft-thresholding operator is the solution to the single-variable lasso problem.  This leads to Algorithm~\ref{Alg:gd-lasso}, which is exactly the same as the coordinate descent algorithm described in \citet{Friedman2010} for fitting lasso-penalized regression models, only with multivariate soft-thresholding replacing univariate soft-thresholding.  Both algorithms are essentially modified backfitting algorithms, with soft-thresholding replacing the usual least squares updating.

\begin{algorithm}
\caption{\label{Alg:gd-lasso} Group descent algorithm for the group lasso}
  \begin{algorithmic}
    \Repeat
    \For{$j = 1, 2, \ldots, J$}
      \State $\z_j = \X_j^T\r + \bb_j$
      \State $\bb_j' \gets S\left(\z_j,\lam_j\right)$
      \State $\r' \gets \r - \X_j^T(\bb_j'-\bb_j)$
    \EndFor
    \Until convergence
  \end{algorithmic}
\end{algorithm}

In Algorithm~\ref{Alg:gd-lasso}, $\bb_j$ refers to the current ({\em i.e.}, most recently updated) value of coefficients in the $j$th group prior to the execution of the for loop; during the loop, $\bb_j$ is updated to $\bb_j'$.  The same notation is applied to $\r$, where $\r$ refers to the residuals: $\r = \y - \sum_j\X_j\bb_j$.  The ``$\gets$'' refers to the fact that $\bb_j$ and $\r$ are being continually updated; at convergence, $\bbh$ consists of the final updates $\{\bb_j\}$.  The expression $\z_j = \X_j^T\r + \bb_j$ is derived from
\al{eq:partial-residuals}{\z_j = \X_j^T(\y-\Xj\bbj) = \X_j^T\r + \bb_j;}
the algorithm is implemented in this manner because it is more efficient computationally to update $\r$ than to repeatedly calculate the partial residuals $\y-\Xj\bbj$, especially in high dimensions.

The computational efficiency of Algorithm~\ref{Alg:gd-lasso} is clear: no complicated numerical optimization steps or matrix factorizations or inversions are required, only a small number of simple arithmetic operations.  This efficiency is possible only because the groups $\{\X_j\}$ are made to be orthonormal prior to model fitting.  Without this initial orthonormalization, we cannot obtain the simple closed-form solution \eqref{eq:mv-soft}, and the updating steps required to fit the group lasso become considerably more complicated, as in \citet{Friedman2010a}, \citet{Foygel2010}, and \citet{Puig2011}.

\subsection{Group MCP and group SCAD}

We have just seen how the group lasso may be viewed as applying the lasso/soft-thresholding operator to the length of each group.  Not only does this formulation lead to a very efficient algorithm, it also makes it clear how to extend other univariate penalties to the group setting.  Here, we focus on two popular alternative univariate penalties to the lasso: SCAD, the smoothly clipped absolute deviation penalty \citep{Fan2001} and MCP, the minimax concave penalty \citep{Zhang2010}.

The two penalties are similar in motivation, definition, and performance.  The penalties are defined on $[0,\infty)$ for $\lam > 0$ as follows, and plotted on the left side of Figure~\ref{Fig:shapes}:
\al{eq:scad}{
  \textrm{SCAD:} \quad p_{\lambda, \gamma}(\theta) &= \begin{cases} \lambda \theta & \text{if } \theta \leq \lambda \\
    \frac{\gamma\lambda\theta-0.5(\theta^2+\lambda^2)}{\gamma-1} & \text{if } \lambda < \theta \leq \gamma\lambda \\
    \frac{\lambda^2(\gamma^2-1)}{2(\gamma-1)} & \text{if } \theta > \gamma\lambda \end{cases}
  \\ \label{eq:mcp}
  \textrm{MCP:} \quad p_{\lambda,\gamma}(\theta) &= \begin{cases} \lambda \theta - \frac{\theta^2}{2\gamma} & \text{if } \theta \leq \gamma\lambda \\
    \frac{1}{2} \gamma\lambda^2 & \text{if } \theta > \gamma\lambda \end{cases}.
}
To have a well-defined minimum, we must have $\gamma>1$ for MCP and $\gamma > 2$ for SCAD.  Although originally proposed for univariate penalized regression, these penalties may be extended to the grouped-variable selection problem by substituting \eqref{eq:scad} and \eqref{eq:mcp} into \eqref{eq:generic}, as has been proposed in \citet{Wang2007} and discussed in \citet{Huang2012}.  We refer to these penalized regression models as the Group SCAD and Group MCP methods, respectively.

The rationale behind the penalties can be understood by considering their derivatives, which appear in the middle panel of Figure~\ref{Fig:shapes}.  MCP and SCAD begin by applying the same rate of penalization as the lasso, but continuously relax that penalization until the point at which $\theta = \gamma\lambda$, where the rate of penalization has fallen all the way to 0.  The aim of both penalties is to achieve the variable selection properties of the lasso, but to introduce less bias towards zero among the true nonzero coefficients.  The only difference between the two is that MCP reduces the rate of penalization immediately, while SCAD remains flat for a while before moving towards zero.

\begin{figure}
 \centering
 \includegraphics[width=\linewidth]{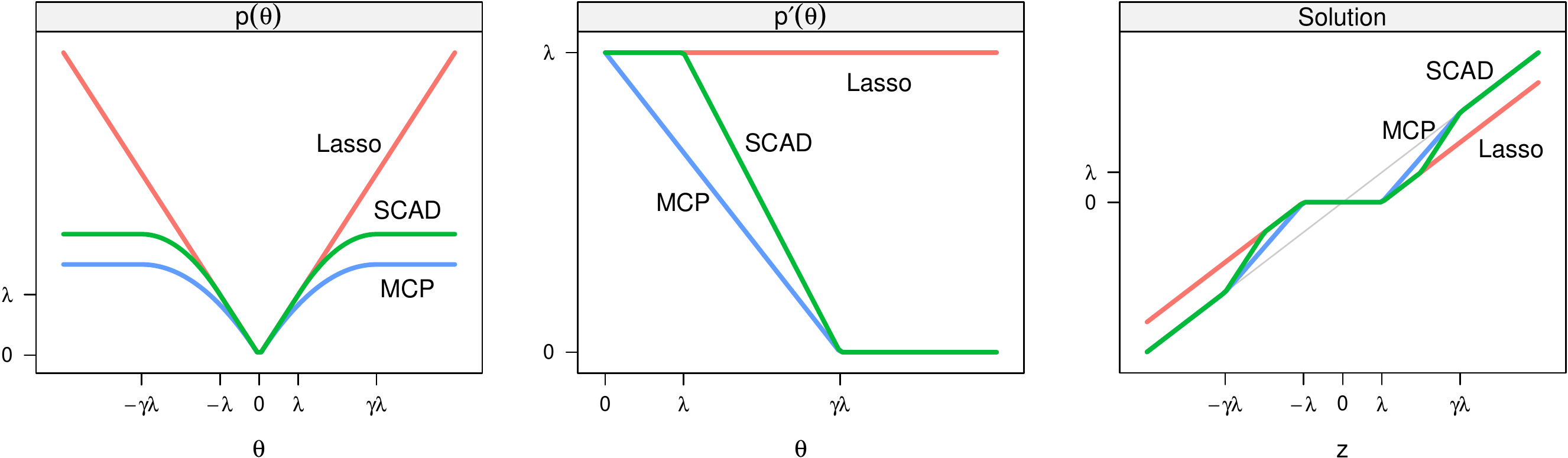}
 \caption{\label{Fig:shapes} Lasso, SCAD, and MCP penalty functions, derivatives, and univariate solutions.  The panel on the left plots the penalties themselves, the middle panel plots the first derivative of the penalty, and the right panel plots the univariate solutions as a function of the ordinary least squares estimate.  The light gray line in the rightmost plot is the identity line.  Note that none of the penalties are differentiable at $\beta_j=0$.}
\end{figure}

The rationale behind the penalties can also be understood by considering their univariate solutions.  Consider the simple linear regression of $\y$ upon $\x$, with unpenalized least squares solution $z=n^{-1}\x^T\y$.  For this simple linear regression problem, the MCP and SCAD estimators have the following closed forms:
\al{eq:mcp-sol}{
  \bh &= F(z,\lambda,\gamma) = \begin{cases} \frac{S(z,\lambda)}{1-1/\gamma} & \text{if } \abs{z} \leq \gamma\lambda \\
    z & \text{if } \abs{z} > \gamma\lambda \text{,} \end{cases} \\ \label{eq:scad-sol}
  \bh &= F_S(z,\lambda,\gamma) = \begin{cases} S(z,\lambda) & \text{if } \abs{z} \leq 2\lambda \\
    \frac{S(z,\gamma\lambda/({\gamma-1}))}{1-1/(\gamma-1)} & \text{if } 2\lambda < \abs{z} \leq \gamma\lambda \\
    z & \text{if } \abs{z} > \gamma\lambda \text{.} \end{cases}}
Noting that $S(z,\lambda)$ is the univariate solution to the lasso, we can observe by comparison that MCP and SCAD scale the lasso solution upwards toward the unpenalized solution by an amount that depends on $\gamma$.  For both MCP and SCAD, when $\abs{z} > \gamma\lam$, the solution is scaled up fully to the unpenalized least squares solution.  These solutions are plotted in the right panel of Figure~\ref{Fig:shapes}; the figure illustrates how the solutions, as a function of $z$, make a smooth transition between the lasso and least squares solutions.  This transition is essential to the oracle property described in the introduction.

As $\gamma \to \infty$, the MCP and lasso solutions are identical.  As $\gamma \to 1$, the MCP solution becomes the hard thresholding estimate $zI_{\abs{z}>\lambda}$.  Thus, in the univariate sense, the MCP produces the ``firm shrinkage'' estimator of \citet{Gao1997}; hence the $F(\cdot)$ notation.  The SCAD solutions are similar, of course, but not identical, and thus involve a ``SCAD-modified firm thresholding'' operator which we denote $F_S(\cdot)$.  In particular, the SCAD solutions also have soft-thresholding as the limiting case when $\gamma \to \infty$, but do not have hard thresholding as the limiting case when $\gamma \to 2$.

We extend these two firm-thresholding operators to multivariate arguments as in \eqref{eq:mv-soft}, with $F(\cdot)$ or $F_S(\cdot)$ taking the place of $S(\cdot)$, and note that $F(\z_j,\lam,\gamma)$ and $F_S(\z_j,\lam,\gamma)$ optimize the objective functions for Group MCP and Group SCAD, respectively, with respect to $\bb_j$.  An illustration of the nature of these estimators is given in Figure~\ref{Fig:paths}.  We note the following: (1) All estimators carry out group selection, in the sense that, for any value of $\lam$, the coefficients belonging to a group are either wholly included or wholly excluded from the model.  (2) The group MCP and group SCAD methods eliminate some of the bias towards zero introduced by the group lasso.  In particular, at $\lam \approx 0.2$, they produce the same estimates as a least squares regression model including only the nonzero covariates (the ``oracle'' model).  (3) Group MCP makes a smoother transition from $\zero$ to the unpenalized solutions than group SCAD.  This is the ``minimax'' aspect of the penalty.  Any other penalty that makes the transition between these two extremes must have some region ({\em e.g.} $\lam \in [0.7, 0.5]$ for group SCAD) in which its solutions are changing more abruptly than those of group MCP.

\begin{figure}
 \centering
 \includegraphics[width=\linewidth]{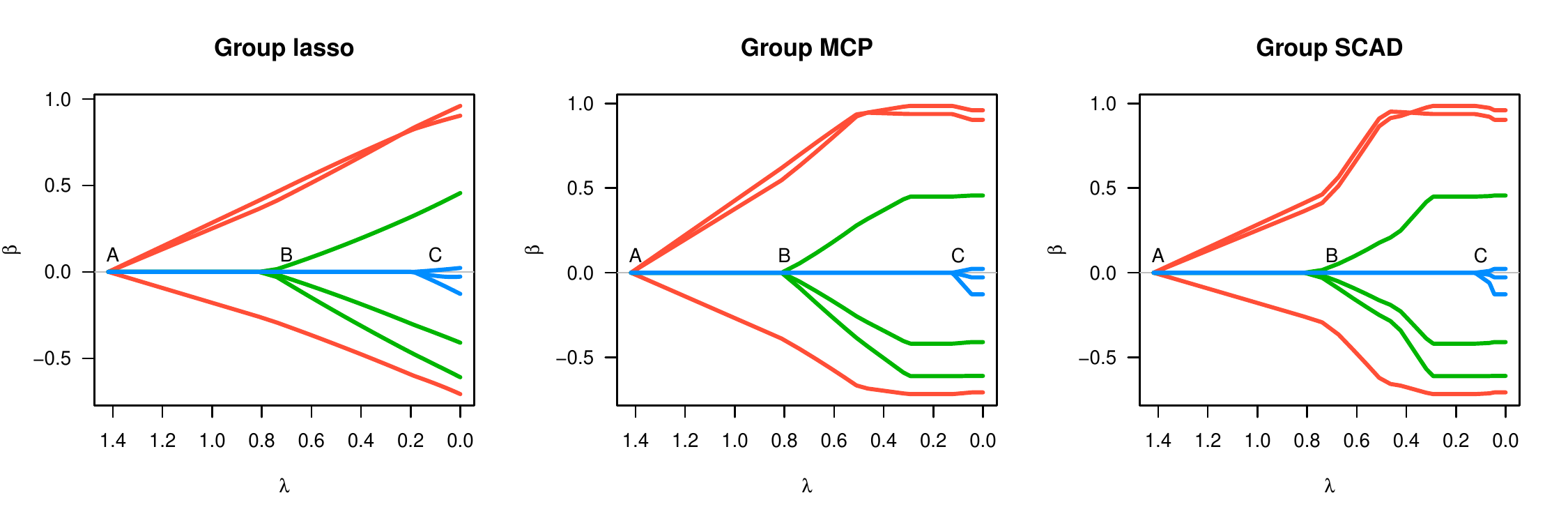}
 \caption{\label{Fig:paths} Representative solution paths for the group lasso, group MCP, and group SCAD methods.  In the generating model, groups A and B have nonzero coefficients and while those belonging to group C are zero.}
\end{figure}

It is straightforward to extend Algorithm~\ref{Alg:gd-lasso} to fit Group SCAD and Group MCP models; all that is needed is to replace the multivariate soft-thresholding update with a multivariate firm-thresholding update.  The group updates for all three methods are listed below:
\as{
  \textrm{Group lasso}: \bb_j &\gets S(\z_j,\lam_j) = S(\norm{\z_j},\lam_j) \frac{\z_j}{\norm{\z_j}}\\
  \textrm{Group MCP}: \bb_j &\gets F(\z_j,\lam_j, \gamma) = F(\norm{\z_j},\lam_j, \gamma) \frac{\z_j}{\norm{\z_j}}\\
  \textrm{Group SCAD}: \bb_j &\gets F(\z_j,\lam_j, \gamma)  = F_S(\norm{\z_j},\lam_j, \gamma) \frac{\z_j}{\norm{\z_j}};
}
recall that $\lam_j = \lam\sqrt{K_j}$.  Note that all three updates are simple, closed-form expressions.  Furthermore, as each update minimizes the objective function with respect to $\bb_j$, the algorithms possess the descent property, meaning that they decrease the objective function with every iteration.  This, along with the fact that $Q$ is a strictly convex function with respect to $\bb_j$ (see Lemma 1 in the Appendix for details) leads to attractive convergence properties, which we now state formally.

\begin{prop}
\label{Prop:linear}
Let $\bb^{(m)}$ denote the value of the fitted regression coefficient at the end of iteration $m$.  At every iteration of the proposed group descent algorithms for linear regression models involving group lasso, group MCP, or group SCAD penalties,
\as{Q(\bb^{(m+1)}) \leq Q(\bb^{(m)}).}
Furthermore, every limit point of the sequence $\{\bb^{(1)}, \bb^{(2)}, \ldots\}$ is a stationary point of $Q$.
\end{prop}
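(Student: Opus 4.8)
The plan is to treat the two assertions separately. The descent property follows directly from how the updates are built: iteration $m+1$ cycles through the groups $j=1,\dots,J$, each time replacing $\bb_j$ by the closed-form update ($S$, $F$, or $F_S$ applied to $\z_j$) with the other groups held fixed. By Lemma~1, $Q$ regarded as a function of $\bb_j$ alone is strictly convex, and the updates in \eqref{eq:mv-soft} and its firm-thresholding analogues are precisely its unique minimizers, so no group update can increase $Q$; composing the $J$ updates that constitute one iteration gives $Q(\bb^{(m+1)})\le Q(\bb^{(m)})$.

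For the limit points I would write $\bb^{(m,j)}$ for the iterate obtained after the first $j$ group updates of iteration $m+1$, so that $\bb^{(m,0)}=\bb^{(m)}$ and $\bb^{(m,J)}=\bb^{(m+1)}$. Because $Q\ge 0$ and $\{Q(\bb^{(m)})\}$ is nonincreasing, it converges, say to $Q^\star$, and monotonicity along the intermediate updates forces $Q(\bb^{(m,j-1)})-Q(\bb^{(m,j)})\to 0$ for each $j$. The key step is to upgrade the strict convexity of Lemma~1 to a \emph{uniform} strong convexity: after orthonormalization the $\bb_j$-dependence of $Q$ is $\frac{1}{2}\norm{\bb_j}^2-\z_j^T\bb_j+p_\lam(\norm{\bb_j})$ up to terms not involving $\bb_j$, and for $\gamma>1$ (MCP) or $\gamma>2$ (SCAD) this is strongly convex with a modulus $c(\gamma)>0$ that does not depend on $m$. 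Since $\bb_j^{(m,j)}$ minimizes this function, $Q(\bb^{(m,j-1)})-Q(\bb^{(m,j)})\ge\frac{c(\gamma)}{2}\norm{\bb_j^{(m,j)}-\bb_j^{(m,j-1)}}^2$, so the increments vanish: $\norm{\bb^{(m,j)}-\bb^{(m,j-1)}}\to 0$ for every $j$, hence $\norm{\bb^{(m+1)}-\bb^{(m)}}\to 0$.

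Next I would take a limit point $\bb^\star$, with $\bb^{(m_k)}\to\bb^\star$ along a subsequence; by the previous step $\bb^{(m_k,j)}\to\bb^\star$ for all $j=0,1,\dots,J$ as well. The update of group $j$ at iteration $m_k+1$ satisfies the block optimality condition $\zero\in\partial_{\bb_j}Q(\bb^{(m_k,j)})$ (compare \eqref{eq:score}, with the derivative of $p_\lam$ replacing the lasso term). Since $\z_j$ depends continuously on the remaining blocks and the block-$j$ subdifferential of $Q$ has closed graph (outer semicontinuous: the only nondifferentiability of $p_\lam(\norm{\cdot})$ is at $\zero$, where the subdifferential is $\{\v:\norm{\v}\le\lam_j\}$), letting $k\to\infty$ gives $\zero\in\partial_{\bb_j}Q(\bb^\star)$ for every $j$. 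Finally, because the least-squares loss is smooth and the penalty is separable across groups, the directional derivative of $Q$ at $\bb^\star$ decomposes as a sum of block directional derivatives, so block-wise stationarity for all $j$ makes $\bb^\star$ a stationary point of $Q$.

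The hard part will be the vanishing-increment estimate. Because $Q$ is not jointly convex, the textbook block-coordinate-descent arguments — which assume convexity, or at most a convex nonsmooth term — do not apply off the shelf; one has to exploit that the concavity of the penalty is strictly dominated by the quadratic loss, which is exactly why the restrictions $\gamma>1$ (MCP) and $\gamma>2$ (SCAD) are needed, in order to obtain an $m$-independent modulus of strong convexity, without which the objective values could stabilize while the iterates keep drifting. A lesser subtlety is the nonconvex, nonsmooth subdifferential calculus at iterates for which some block equals $\zero$, which is used when passing to the limit in the optimality conditions.
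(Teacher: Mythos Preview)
Your proof is correct, but it takes a genuinely different route from the paper's. The paper disposes of the limit-point claim in one line by checking that Lemma~1 (groupwise strict convexity), together with continuity and coercivity of the least-squares loss, are exactly the hypotheses needed to invoke Theorem~4.1 of \citet{Tseng2001}; it then adds a short argument that the limit point is in fact unique. Your argument instead works from scratch: you upgrade Lemma~1 to a uniform strong-convexity modulus $c(\gamma)=1-1/\gamma$ (MCP) or $1-1/(\gamma-1)$ (SCAD), use it to show $\shortnorm{\bb^{(m+1)}-\bb^{(m)}}\to 0$, and then pass to the limit in the block optimality conditions via outer semicontinuity of the subdifferential. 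This is essentially a hand-rolled version of what Tseng's theorem provides, specialized to the present structure. The advantage of your route is that it is self-contained and makes transparent \emph{why} the restrictions $\gamma>1$ and $\gamma>2$ are exactly what is needed (they give an $m$-independent strong-convexity constant); the paper's route is much shorter but defers the real work to an external reference. Both approaches establish the proposition as stated; the paper's additional uniqueness argument could also be recovered from your vanishing-increment estimate together with coercivity of $Q$.
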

For the group lasso penalty, the objective function being minimized is convex, and thus the above proposition establishes convergence to the global minimum.  For group MCP and group SCAD, whose objective function is a sum of convex and nonconvex components, convergence to a local minimum is possible.

A similar algorithm was explored in \citet{She2012}, who proposed the same updating steps as above, although without an initial orthonormalization.  Interestingly, She showed that even without orthonormal groups, the updating steps described above still produce a sequence $\bb^{(m)}$ converging to a stationary point of the objective function.  Without orthonormal groups, however, the updates are not exact group-wise solutions.  In other words, in the single-group case, our approach (with an initial orthonormalization) produces exact solutions in one step, whereas the approach in \citet{She2012} requires multiple iterations to converge to the same solution.  This leads to a considerable loss of efficiency, as we will see in Section~\ref{Sec:time}.

In conclusion, the algorithms we present here for fitting group lasso, group MCP, and group SCAD models are both fast and stable.  We examine the empirical running time of these algorithms in Section~\ref{Sec:time}.

\subsection{Logistic regression}
\label{Sec:logistic}

It is possible to extend the algorithms described above to fit group-penalized logistic regression models as well, where the loss function is the negative log-likelihood of a binomial distribution:
\as{L(\be) = \frac{1}{n}\sum_i L_i(\eta_i) = -\frac{1}{n}\sum_i \log \Pr(y_i|\eta_i).}
Recall, however, that the simple, closed form solutions of the previous sections were possible only with orthonormalization.  The iteratively reweighted least squares (IRLS) algorithm typically used to fit generalized linear models (GLMs) introduces a $n^{-1}\X^T\W\X$ term into the score equation \eqref{eq:score}, where $\W$ is an $n \times n$ diagonal matrix of weights.  Because $n^{-1}\X^T\W\X \neq \I$, the group lasso, group MCP, and group SCAD solutions will lack the simple closed forms of the previous section.

However, we may preserve the sphericity of the likelihood surface (Figure~\ref{Fig:geom}) through the application of a majorization-minimization (MM) approach \citep{Lange2000,Hunter2004}.  In the context of penalized logistic regression, this approach was proposed by \citet{Krishnapuram2005}, who referred to it as a {\em bound optimization algorithm}.  The application of the method depends on the ability to bound the Hessian of the loss function with respect to the linear predictor $\be$.  Let $v=\max_i \sup_\eta \{\nabla^2 L_i(\eta)\}$, so that $v\I-\nabla^2 L(\be)$ is a positive semidefinite matrix at all points $\be$.  For logistic regression, where
\as{\pi_i = \Pr(Y_i=1|\eta_i) = \frac{e^{\eta_i}}{1+e^{\eta_i}},}
we may easily obtain $v=1/4$, since $\nabla^2 L_i(\eta)=\pi(1-\pi)$.

Bounding the loss function in this manner allows us to define
\as{\tilde{L}(\be|\be^*) = L(\be^*) + (\be-\be^*)^T\nabla L(\be^*) + \frac{v}{2}(\be-\be^*)^T(\be-\be^*)}
such that the function $\tilde{L}(\be|\be^*)$ has the following two properties:
\as{
  \tilde{L}(\be^*|\be^*) &= L(\be^*)\\
  \tilde{L}(\be|\be^*) &\geq L(\be).
}
Thus, $\tilde{L}(\be|\be^*)$ is a majorizing function of $L(\be)$.  The theory underlying MM algorithms then ensures that Algorithm~\ref{Alg:mm}, which consists of alternately executing the majorizing step and the minimization steps,  will retain the descent property of the previous sections, which we formally state below.

\begin{prop}
\label{Prop:logistic}
Let $\bb^{(m)}$ denote the value of the fitted regression coefficient at the end of iteration $m$.  At every iteration of the proposed group descent algorithms for logistic regression involving group lasso, group MCP, or group SCAD penalties,
\as{Q(\bb^{(m+1)}) \leq Q(\bb^{(m)}).}
Furthermore, provided that no elements of $\bb$ tend to $\pm\infty$, every limit point of the sequence $\{\bb^{(1)}, \bb^{(2)}, \ldots\}$ is a stationary point of $Q$.
\end{prop}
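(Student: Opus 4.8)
The plan is to read Algorithm~\ref{Alg:mm} as a majorization--minimization (MM) scheme and to bootstrap off of Proposition~\ref{Prop:linear}. Fix an iterate $\bb^{(m)}$, let $\be^{(m)}$ be the corresponding linear predictor, and set
\as{\tilde{Q}(\bb\,|\,\bb^{(m)}) = \tilde{L}(\be\,|\,\be^{(m)}) + \sum_{j=1}^J p_\lam(\norm{\bb_j}).}
Because $\tilde{L}(\cdot\,|\,\be^{(m)})$ is quadratic in $\be$ with Hessian $v\I$, completing the square in each $\bb_j$ (using $n^{-1}\X_j^T\X_j=\I$) shows that, group by group, $\tilde{Q}(\cdot\,|\,\bb^{(m)})$ has exactly the structure of a group-penalized least-squares objective on orthonormal groups (up to a positive scalar on the quadratic term), with the conditional minimizer in $\bb_j$ given by the same closed-form soft-/firm-thresholding update applied to a working response formed from $\be^{(m)}$ and $\nabla L(\be^{(m)})$. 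Hence the minimization step inherits the conclusion of Proposition~\ref{Prop:linear} applied to $\tilde{Q}(\cdot\,|\,\bb^{(m)})$, so $\tilde{Q}(\bb^{(m+1)}\,|\,\bb^{(m)})\le\tilde{Q}(\bb^{(m)}\,|\,\bb^{(m)})$. Combining this with the two defining properties of the majorization, $\tilde{Q}(\bb^{(m)}\,|\,\bb^{(m)})=Q(\bb^{(m)})$ and $\tilde{Q}(\bb^{(m+1)}\,|\,\bb^{(m)})\ge Q(\bb^{(m+1)})$, gives the sandwich
\as{Q(\bb^{(m+1)}) \;\le\; \tilde{Q}(\bb^{(m+1)}\,|\,\bb^{(m)}) \;\le\; \tilde{Q}(\bb^{(m)}\,|\,\bb^{(m)}) \;=\; Q(\bb^{(m)}),}
which is the first claim.

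For the second claim, note $Q\ge 0$, since the binomial negative log-likelihood and each of the three penalties is nonnegative; hence the monotone sequence $\{Q(\bb^{(m)})\}$ converges, say to $Q^\infty$, and its successive decrements tend to $0$. The hypothesis that no coordinate of $\bb^{(m)}$ diverges — precisely the assumption ruling out (quasi-)separable configurations of the data, for which $Q$ fails to be coercive — keeps the iterates in a bounded set, so limit points exist. Let $\bb^{(m_k)}\to\bb^\infty$ and let $T$ denote the one-step map $\bb^{(m)}\mapsto\bb^{(m+1)}$. I would next verify that $T$ is continuous: the majorization at $\bb^{(m)}$ depends continuously on $\bb^{(m)}$ because $\nabla L$ is continuous for the logistic loss, and the minimization step is a composition of the continuous operators $S(\cdot)$, $F(\cdot)$, $F_S(\cdot)$ (if the inner loop is run to convergence rather than for a fixed number of passes, continuity of $T$ instead follows from the strict convexity of $\tilde{Q}$ in each $\bb_j$ guaranteed by Lemma~1). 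Continuity of $T$ and of $Q$, together with $Q(\bb^{(m_k)})\to Q^\infty$ and $Q(T(\bb^{(m_k)}))\to Q(T(\bb^\infty))$, forces $Q(T(\bb^\infty))=Q(\bb^\infty)$: one iteration does not decrease $Q$ at $\bb^\infty$.

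Finally I would upgrade this fixed-point property to stationarity. Running the sandwich above at $\bb^\infty$ with all three quantities equal forces $\tilde{Q}(T(\bb^\infty)\,|\,\bb^\infty)=\tilde{Q}(\bb^\infty\,|\,\bb^\infty)$, so the inner group-descent pass makes no progress on $\tilde{Q}(\cdot\,|\,\bb^\infty)$; by the strict convexity of $\tilde{Q}(\cdot\,|\,\bb^\infty)$ in each block (Lemma~1), $\bb^\infty$ is a group-wise minimizer of $\tilde{Q}(\cdot\,|\,\bb^\infty)$. Since this surrogate is a smooth function plus a penalty that is separable across groups, the same reasoning used in the proof of Proposition~\ref{Prop:linear} shows that a group-wise minimizer is a stationary point, i.e.\ $\zero\in\partial\tilde{Q}(\bb^\infty\,|\,\bb^\infty)$. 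The MM construction is tangent to first order — $\tilde{L}(\cdot\,|\,\be^\infty)$ and $L$ agree in value and gradient at $\be^\infty$ — so $Q$ and $\tilde{Q}(\cdot\,|\,\bb^\infty)$, differing by a function of $\bb$ that is differentiable with vanishing gradient at $\bb^\infty$, have the same subdifferential there; hence $\zero\in\partial Q(\bb^\infty)$.

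The step I expect to be the main obstacle is this last one: converting ``$T$ does not move $Q$ at $\bb^\infty$'' into genuine stationarity of $Q$. It rests on two structural features — the separability of the penalty across groups, which makes a block-coordinate minimizer of the quadratic surrogate a stationary point of the surrogate (with the usual care required at $\bb_j=\zero$, where $p_\lam(\norm{\bb_j})$ is non-differentiable), and the first-order tangency of the majorization, which transfers stationarity from $\tilde{Q}(\cdot\,|\,\bb^\infty)$ back to $Q$. Ancillary points needing care are the continuity of the update map (handled above according to how the inner loop is terminated), the fact that for group MCP and group SCAD the relevant convexity in Lemma~1 is only per-block, not joint, so the conclusion is stationarity rather than global optimality, and the invocation of the no-divergence hypothesis to secure the compactness without which the argument breaks down along separating hyperplanes.
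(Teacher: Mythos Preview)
Your descent argument via the MM sandwich is exactly the paper's. For stationarity, the paper takes a shorter route: once boundedness gives a limit point $\tbb$, it invokes \citet{Tseng2001} directly on the surrogate $R(\cdot\,|\,\tbb)$ to conclude that $\tbb$ is a stationary point of the surrogate, and then uses the first-order tangency of the majorization --- precisely your final paragraph --- to transfer this to $Q$. You instead build the fixed-point-to-stationarity step by hand: continuity of the one-step map, equality throughout the sandwich at $\bb^\infty$, and strict block-wise convexity forcing $T(\bb^\infty)=\bb^\infty$. This is more self-contained and makes explicit what the appeal to Tseng leaves implicit, at the cost of some length; the paper's version is terser but leans on an external reference whose hypotheses one still has to verify.

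One genuine slip: you invoke Lemma~\ref{Lemma:convex-lin} for the block-wise strict convexity of $\tilde{Q}(\cdot\,|\,\bb^\infty)$, but that lemma is calibrated to the linear-regression objective, where the quadratic term carries coefficient $1$. The logistic surrogate has quadratic coefficient $v=1/4$, so the minimum directional second derivative in a block is $\tfrac{1}{4}-\tfrac{1}{\gamma}$ for group MCP and $\tfrac{1}{4}-\tfrac{1}{\gamma-1}$ for group SCAD, not $1-\tfrac{1}{\gamma}$ and $1-\tfrac{1}{\gamma-1}$. The paper records this separately as Lemma~\ref{Lemma:convex-log}, which requires $\gamma>4$ (MCP) and $\gamma>5$ (SCAD). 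Your argument goes through verbatim once you substitute the correct lemma; as written, the strict-convexity claim is not justified under the $\gamma$ ranges Lemma~\ref{Lemma:convex-lin} would allow.
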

As with linear regression, this result implies convergence to a global minimum for the group lasso, but allows convergence to local minima for group MCP and group SCAD.  Note, however, that unlike linear regression, in logistic regression maximum likelihood estimates can occur at $\pm\infty$ (this is often referred to as {\em complete separation}).  In practice, this is not a concern for large values of $\lam$, but saturation of the model will certainly occur when $p > n$ and $\lam$ is small.  Our implementation in \texttt{grpreg} terminates the path-fitting algorithm if saturation is detected, based on a check of whether $>99\%$ of the null deviance has been explained by the model.

Writing $\tilde{L}(\be|\be^*)$ in terms of $\bb$, we have
\as{\tilde{L}(\bb) \propto \frac{v}{2n}(\yy - \X\bb)^T(\yy - \X\bb),}
where $\yy = \be^* + (\y-\bp)/v$ is the pseudo-response vector.  Thus, the gradient of $\tilde{L}(\be|\be^*)$ with respect to $\bb_j$ is given by
\al{eq:score-mm}{\nabla \tilde{L}(\bb_j) = -v\z_j + v\bb_j,}
where, as before, $\z_j = \X_j^T(\yy-\Xj\bbj)$ is the vector of partial (pseudo-) residuals for $\bb_j$.

\begin{algorithm}
\caption{\label{Alg:mm} Group descent algorithm for logistic regression with a group lasso penalty}
  \begin{algorithmic}
    \Repeat
      \State $\be \gets \X\bb$
      \State $\bp \gets \{e^{\eta_i}/(1+e^{\eta_i})\}_{i=1}^n$
      \State $\rr \gets (\y-\bp)/v$
      \For{$j = 1, 2, \ldots, J$}
        \State $\z_j = \X_j^T\rr + \bb_j$
        \State $\bb_j' \gets S\left(v\z_j,\lam_j\right)/v$
        \State $\rr' \gets \rr - \X_j^T(\bb_j'-\bb_j)$
    \EndFor
    \Until convergence
  \end{algorithmic}
\end{algorithm}

The presence of the scalar $v$ in the score equations affects the updating equations; however, as the majorized loss function remains spherical with respect to $\bb$, the updating equations still have simple, closed form solutions:
\as{
  \textrm{Group lasso}: \bb_j &\gets \frac{1}{v} S(v\z_j,\lam_j) = \frac{1}{v} S(v\norm{\z_j},\lam_j) \frac{\z_j}{\norm{\z_j}}\\
  \textrm{Group MCP}: \bb_j &\gets \frac{1}{v} F(v\z_j,\lam_j, \gamma) = \frac{1}{v} F(v\norm{\z_j},\lam_j, \gamma) \frac{\z_j}{\norm{\z_j}}\\
  \textrm{Group SCAD}: \bb_j &\gets \frac{1}{v} F(v\z_j,\lam_j, \gamma)  = \frac{1}{v} F_S(v\norm{\z_j},\lam_j, \gamma) \frac{\z_j}{\norm{\z_j}}
}
Algorithm~\ref{Alg:mm} is presented for the group lasso, but is easily modified to fit group MCP and group SCAD models by substituting the appropriate expression into the updating step for $\bb_j$.

Note that Proposition~\ref{Prop:logistic} does not necessarily follow for other generalized linear models, as the Hessian matrices for other exponential families are typically unbounded.  One possibility is to set $v \gets \max_i \{\nabla^2 L_i(\eta_i^*)\}$ at the beginning of each iteration as a pseudo-upper bound.  As this is not an actual upper bound, an algorithm based on it is not guaranteed to possess the descent property.  Nevertheless, the approach seems to work well in practice.  The authors have tested the approach on the group-penalized Poisson regression models and did not observe any problems with convergence, although we have not studied these models as extensively as the logistic regression model.

\subsection{Path-fitting algorithm}
\label{Sec:path}

The above algorithms are presented from the perspective of fitting a penalized regression model for a single value of $\lam$.  Usually, we are interested in obtaining $\bbh$ for a range of $\lam$ values, and then choosing among those models using either cross-validation or some form of information criterion.  The regularization parameter $\lam$ may vary from a maximum value $\lambda_{\max}$ at which all penalized coefficients are 0 down to $\lambda=0$ or to a minimum value $\lambda_{\min}$ beyond which the model becomes excessively large.  When the objective function is strictly convex, the estimated coefficients vary continuously with $\lam \in [\lam_{\min}, \lam_{\max}]$ and produce a path of solutions regularized by $\lam$.  An examples of such a path may be seen in Figure~\ref{Fig:paths}.

Algorithms~\ref{Alg:gd-lasso} and \ref{Alg:mm} are iterative and require initial values; the fact that $\bbh=\zero$ at $\lam_{\max}$ provides an efficient approach to choosing those initial values.  Group lasso, group MCP, and group SCAD all have the same value of $\lam_{\max}$; namely, $\lam_{\max}=\max_j\{\norm{\z_j}\}$ for linear regression or $\lam_{\max}=\max_j\{v\norm{\z_j}\}$ for logistic regression, where the $\{\z_j\}$ are computed with respect to the intercept-only model (or, if unpenalized covariates are present, with respect to the residuals of the fitted model including all of the unpenalized covariates).  Thus, by starting at $\lam_{\max}$ where $\bbh=0$ is available in closed form and proceeding towards $\lambda_{\min}$, using $\bbh$ from the previous value of $\lam$ as the initial value for the next value of $\lam$, we can ensure that the initial values will never be far from the solution, a helpful property often referred to as ``warm starts'' in the path-fitting literature.

\section{Algorithm efficiency}
\label{Sec:time}

Here, we briefly comment on the efficiency of the proposed algorithms.  Regardless of the penalty chosen, the most computationally intensive steps in Algorithm~\ref{Alg:gd-lasso} are the calculation of the inner products $\X_j^T\r$ and $\X_j^T(\bb_j'-\bb_j)$, each of which requires $O(nK_j)$ operations.  Thus, one full pass over all the groups requires $O(2np)$ operations.  The fact that this approach scales linearly in $p$ allows it to be efficiently applied to high-dimensional problems.

Of course, the entire time required to fit the model depends on the number of iterations, which in turn depends on the data and on $\lam$.  Broadly speaking, the dominant factor influencing the number of iterations is the number of nonzero groups at that value of $\lam$, since no iteration is required to solve for groups that remain fixed at zero.  Consequently, when fitting a regularized path, a disproportionate amount of time is spent at the least sparse end of the path, where $\lam$ is small.

Table~\ref{Tab:time} compares our implementation (\texttt{grpreg}) with two other publicly available \texttt{R} packages for fitting group lasso models over increasingly large data sets: the \texttt{grplasso} package \citep{Meier2008} and the \texttt{standGL} package \citep{Simon2011a}.  We note that (a) the \texttt{grpreg} implementation appears uniformly more efficient than the others, and (b) that group MCP and group SCAD tend to be slightly faster than group lasso.  Presumably, this is because their solution paths tend to be more sparse.

\ctable[
 caption={Comparison of {\tt grpreg} with other publicly available group lasso packages.  The median time, over 100 independent data sets, required to fit the entire solution path over a grid of 100 $\lam$ values is reported in seconds.  Each group consisted of 10 variables; thus $p$ ranges over $\{10,100,1000\}$ across the columns.},
 label=Tab:time]{lllrrr}{
 \tnote[a]{$\SE \leq 0.02$}
 \tnote[b]{$\SE \leq 0.1$}
 \tnote[c]{$\SE \leq 3$}}{\FL
 & & & n=50 & n=500 & n=5000 \\
 & & & J=1\tmark[a]  & J=10\tmark[b]  & J=100\tmark[c]  \\
\midrule
Linear & \texttt{grpreg} & Group lasso & 0.01 & 0.1 & 18 \\
regression & \texttt{grpreg} & Group MCP & $< 0.01$ & 0.1 & 11 \\
 & \texttt{grpreg} & Group SCAD & $< 0.01$ & 0.1 & 11 \\
 & \texttt{grplasso} & Group lasso & 0.17 & 1.9 & 61 \\
 & \texttt{standGL} & Group lasso & 0.04 & 1.3 & 153 \\[0.2cm]
Logistic & \texttt{grpreg} & Group lasso & 0.01 & 0.2 & 105 \\
regression & \texttt{grpreg} & Group MCP & 0.01 & 0.1 & 45 \\
 & \texttt{grpreg} & Group SCAD & 0.01 & 0.1 & 78 \\
 & \texttt{grplasso} & Group lasso & 0.52 & 5.1 & 202 \\
 & \texttt{standGL} & Group lasso & 0.40 & 6.4 & 400 \LL}

It is worth noting that all three of these packages can handle $p > n$ problems; however, for the purposes of timing, we chose to restrict our attention to problems in which the entire path can be computed.  Otherwise, different implementations may terminate the fitting process at different points along the path, which would prevent a fair comparison of computing times.

Finally, let us compare these results to those presented in \citet{She2012}.  For the algorithm presented in that paper, the author reports an average time of 32 minutes to estimate the group SCAD regression coefficients when $n=100$ and $p=500$.  For the same size problem, our approach required a mere 0.35 seconds.

\section{Simulation studies}
\label{Sec:sim}

In this section, we compare the performance of group lasso, group MCP, and group SCAD using simulated data.  First, a relatively basic setting is used to illustrate the primary advantages of group MCP and group SCAD over group lasso.  We then attempt to mimic two settings in which the methodology might be used: to allow flexible semiparametric modeling of continuous variables and in genetic association studies, which involve large numbers of categorical variables.  We use the term ``null predictor'' to refer to a covariate whose associated regression coefficient is zero in the true model.

In all of the studies, five-fold cross-validation was used to choose the regularization parameter $\lam$.  Group SCAD and group MCP have an additional tuning parameter, $\gamma$.  In principle, one may attempt to select optimal values of $\gamma$ using, for example, cross-validation over a two-dimensional grid or using an information criterion approach.  Here, we fix $\gamma=3$ for group MCP and $\gamma=4$ for group SCAD, roughly in line with the default recommendations suggested in \citet{Fan2001} and \citet{Zhang2010} in the non-grouped case.

In Section~\ref{Sec:sim-basic}, we evaluate model accuracy by root mean squared error (RMSE):
\as{\textrm{RMSE} = \sqrt{\frac{1}{p}\sum_{j,k}(\beta_{jk}-\bh_{jk})^2}.}
In Sections~\ref{Sec:sim-cont} and \ref{Sec:sim-cat}, because the model fit to the data is not always the same as the generating model, we focus on root model error (RME) instead:
\as{\textrm{RME} = \sqrt{\frac{1}{n}\sum_i(\mu_i-\mh_i)^2},}
where $\mu_i$ and $\mh_i$ denote the true and estimated mean of observation $i$ given $\x_i$.  Note that the model error, which is also discussed in \citet{Fan2001}, is equal in expected value to the prediction error minus the irreducible error $\sigma^2$.  In all simulations, errors follow a standard Gaussian distribution and results are averaged over 1,000 independently generated data sets.

\subsection{Basic}
\label{Sec:sim-basic}

We begin with a very straightforward study designed to illustrate the basic shortcomings of the group lasso in comparison with group MCP and group SCAD.  The design matrix consists of 100 groups, each with 4 elements.  In five of these groups, the coefficients are $\pm \beta$; in the others, the true regression coefficients are zero.  Covariate values were generated from the standard normal distribution.  We fixed the sample size at 100 ({\em i.e.}, n=100, p=400) and varied $\abs{\beta}$ between 0 and 1.5.  In principle, group lasso should struggle when $\abs{\beta}$ is large, as it cannot alleviate the problem of bias towards zero for large coefficients without lowering $\lam$ and thereby allowing null predictors to enter the model.  Indeed, as Figure~\ref{Fig:basic} illustrates, this is exactly what occurs.

\begin{figure}
 \centering
 \includegraphics[width=\linewidth]{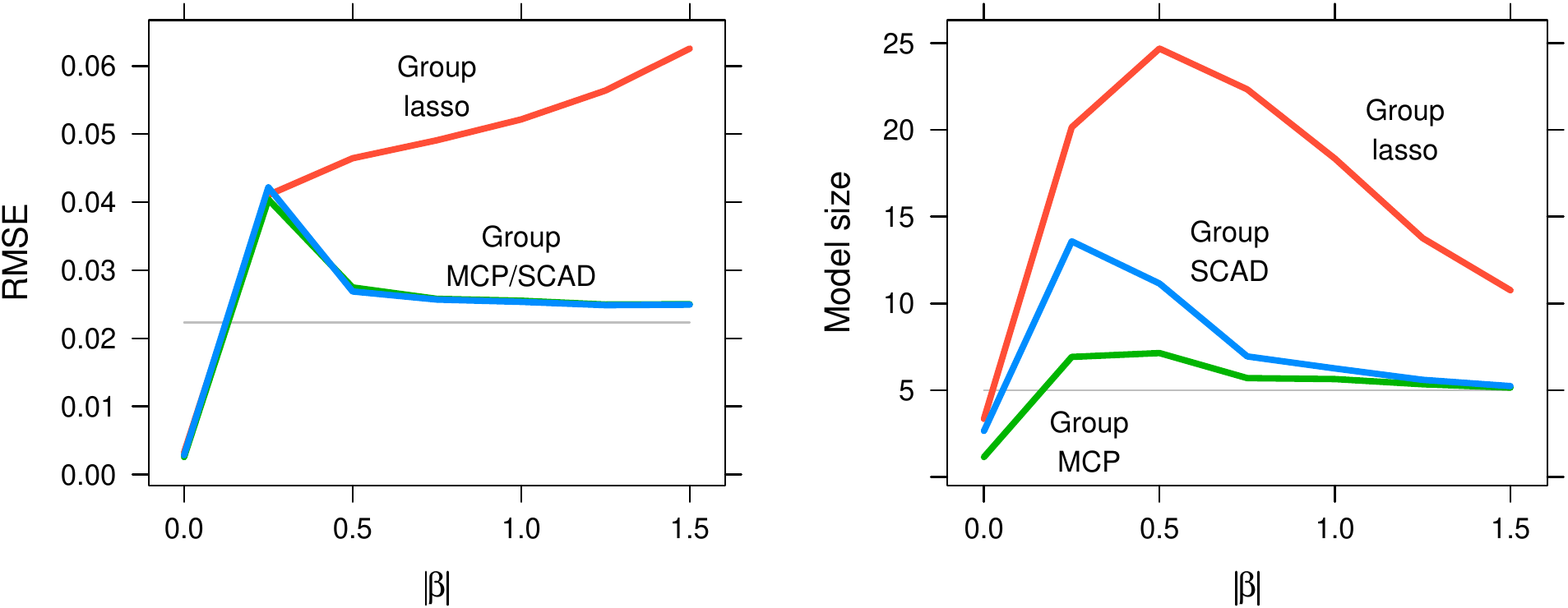}
 \caption{\label{Fig:basic} The impact of increasing coefficient magnitude on group regularization methods.  Model size is given in terms of number of groups ({\em i.e.}, the number of variables in the model is four times the amount shown).  The faint gray line on the left is the theoretically optimal RMSE than can be achieved in this setting.  The faint gray line on the right is the true model size.}
\end{figure}

For small values of the regression coefficients, all three group regularization methods perform similarly.  As we increase the magnitude of these coefficients, however, group MCP and group SCAD begin to estimate $\bb$ with an error approaching the theoretically optimal value, while group lasso performs increasingly poorly.  Furthermore, group MCP and group SCAD select much smaller models and approach the true model size much faster than group lasso, which selects far too many variables.

Comparing group MCP and group SCAD, the two are nearly identical in terms of estimation accuracy.  However, group MCP selects a considerably more sparse model, and has better variable selection properties.  Thus, although the two methods behave similarly in an asymptotic setting, group MCP seems to have somewhat better finite-sample properties.

In the left panel of Figure~\ref{Fig:basic}, we include an ``oracle'' RMSE for reference; we now clarify what exactly we mean by this.  An oracle model (one which knows in advance which coefficients are zero and which are nonzero) would be able to achieve a mean squared error of zero for the zero-coefficient variables and a total MSE of $\textrm{tr}\{(\X^T\X)^{-1}\}$ for the nonzero variables.  In our simulation $\X$ was random, with $\Ex(\X\X^T)=\I$.  Thus, the total MSE of the oracle model is approximately $\textrm{tr}(\I_0/n)$ where $\I_0$ is the identity matrix with dimension equal to that of the nonzero coefficients, with RMSE $\sqrt{s/n}$, where $s$ is the sparsity fraction ({\em i.e.}, the fraction of coefficients that are nonzero).  Note that in any finite sample, the columns of $\X$ will be correlated, so even the oracle model cannot achieve this RMSE for finite sample sizes; the gray line in Figure~\ref{Fig:basic} is thus the optimal RMSE that could be theoretically be achieved in this setting.

\subsection{Semiparametric regression}
\label{Sec:sim-cont}

Our next simulation involves groups of covariates constructed by taking basis expansions of continuous variables to allow for flexible covariate effects in semiparametric modeling.  The sample size was 200 and the data consisted of 100 variables, each of which were generated as independent uniform (0,1) variates.  The first six variables had potentially nonlinear effects given by the following equations:
\as{f_1(x) &= 2(e^{-10x}-e^{-10})/(1-e^{-10})-1\\
  f_2(x) &= -2(e^{-10x}-e^{-10})/(1-e^{-10})+1\\
  f_3(x) &= 2x-1\\
  f_4(x) &= -2x+1\\
  f_5(x) &= 8(x-0.5)^2-1\\
  f_6(x) &= -8(x-0.5)^2+1;
}
the other 94 had no effect on the outcome.  The scaling of the functions is to ensure that each variable attains a minimum and maximum of $(-1,1)$ over the domain of $x$ and thus that the effects of all six variables are roughly comparable.  Visually the effect of the six nonzero variables is illustrated below:
\begin{center}
\includegraphics[width=0.75\linewidth]{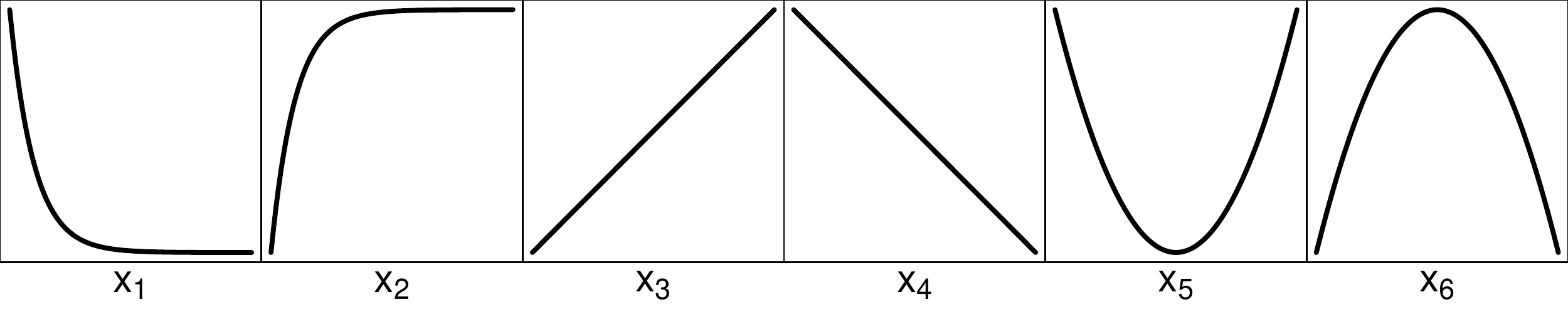}
\end{center}
For model fitting, each variable was represented using a 6-term B-spline basis expansion ({\em i.e.}, $\X$ had dimensions $n=200$, $p=600$).  In addition to the three group selection methods, models were also fit using the lasso ({\em i.e.}, ignoring grouping).  Results are given in Table~\ref{Tab:cont}.

\ctable[
 caption = Prediction and variable selection accuracy for the semiparametric regression simulation.,
 label = Tab:cont,
 pos=ht!]{lrrrr}{
 \tnote[a]{$\SE \leq 0.002$}
 \tnote[b]{$\SE \leq 0.4$}}{
\toprule
 & & {\bf Group} & {\bf Group} & {\bf Group}\\
& {\bf Lasso} & {\bf Lasso} & {\bf MCP} & {\bf SCAD} \\
\cmidrule{1-5}
RME\tmark[a] & 0.73 & 0.59 & 0.50 & 0.52\\
Variables selected\tmark[b] & 31.5 & 29.3 & 10.4 & 23.1 \\
\bottomrule}

Certainly, all three group selection approaches greatly outperform the lasso here.  However, as in the previous section, group MCP and group SCAD are able to achieve superior prediction accuracy which selecting more parsimonious models.  Also as in the previous simulation, group MCP and group SCAD perform similarly as far as prediction accuracy, but group MCP is seen to have better finite-sample variable selection properties --- recall that the true number of variables in the model is only 6.

\subsection{Genetic association study}
\label{Sec:sim-cat}

Finally, we carry out a simulation designed to mimic a small genetic association study involving single nucleotide polymorphisms (SNPs).  Briefly, a SNP is a point on the genome in which multiple versions (alleles) may be present.  A SNP may take on three values $\{0,1,2\}$, depending on the number of minor alleles present.  The effect is not necessarily linear --- for example, if the allele has a recessive effect, the phenotype associated with $x=0$ and $x=1$ are identical, while the phenotype associated with $x=2$ is different.  In such studies, it is desirable to have a method which is robust to different mechanisms of action, yet powerful enough to actually detect important SNPs, as the number of SNPs is typically rather large.

We simulated data involving 250 subjects and 500 SNPs, each of which was represented with 2 indicator functions ({\em i.e.}, n=250, p=1,000).  Three of the variables had an effect on the phenotype (one dominant effect, one recessive effect, one additive effect); the other 497 did not.  In addition to the three group selection methods, we included for comparison two versions of the lasso: one applied to all $p=1,000$ variables and ignoring grouping, the other assuming an additive effect for each genotype.  Note that for the second approach, $p=500$ as we estimate only a single coefficient for each SNP.  The results of this simulation are given in Table~\ref{Tab:cat}.

\ctable[
 caption=Prediction and variable selection accuracy for the genetic association study simulation.  True discoveries are selected variables that have a truly nonzero effect; false discoveries are selected SNPs that have no effect on the phenotype.,
 label=Tab:cat]{lrrrrr}{
 \tnote[a]{$\SE \leq 0.002$}
 \tnote[b]{$\SE \leq 0.02$}
 \tnote[c]{$\SE \leq 0.6$}}{\FL
 & & {\bf Additive} & {\bf Group} & {\bf Group} & {\bf Group}\\
& {\bf Lasso} & {\bf Lasso} & {\bf Lasso} & {\bf MCP} & {\bf SCAD} \\
\ML
RME\tmark[a]             & 0.38 & 0.37 & 0.34 & 0.25 & 0.27 \\
True discoveries\tmark[b]  &  2.6 & 2.2 & 2.6 & 2.4 & 2.5 \\
False discoveries\tmark[c] & 21.1 & 16.3 & 15.5 & 4.0 & 12.5 \LL}

The same broad conclusions may be reached here as in the previous simulations.  In particular, we note that (1) The group selection methods outperform the variable selection methods that either do not account for grouping or that attempt to incorporate grouping in an ad-hoc fashion.  (2) Group MCP and group SCAD outperform the group lasso both in terms of prediction accuracy as well as the number of false discoveries.  (3) Although group MCP and group SCAD are similar in terms of prediction accuracy, group MCP has significantly better variable selection properties, producing only four false discoveries compared to group SCAD's 12.5.

\section{Real data}

We give two examples of applying grouped variable selection methods to real data.  The first is a gene expression study in rats to determine genes associated with Bardet-Biedl syndrome.  The second is a genetic association study to determine SNPs associated with age-related macular degeneration.  As in the previous section, we fix $\gamma=3$ for group MCP, $\gamma=4$ for group SCAD, and select $\lam$ via cross-validation.  For the continuous outcome in Section~\ref{Sec:scheetz}, we use root cross-validation error, defined analogously to root model error in Section~\ref{Sec:sim}, to evaluate predictive accuracy.  For the binary outcome in Section~\ref{Sec:gas}, we use cross-validated misclassification error.

\subsection{Bardet-Biedl syndrome gene expression study}
\label{Sec:scheetz}

The data we analyze here is discussed more fully in \citet{Scheetz2006}.  Briefly, the data set consists of normalized microarray gene expression data harvested from the eye tissue of 120 twelve-week-old male rats.  The outcome of interest is the expression of TRIM32, a gene which has been shown to cause Bardet-Biedl syndrome \citep{Chiang2006}.  Bardet-Biedl syndrome is a genetic disease of multiple organ systems including the retina.

Following the approach in \citet{Scheetz2006}, 18,976 of the 31,042 probe sets on the array ``exhibited sufficient signal for reliable analysis and at least 2-fold variation in expression.''  These probe sets include TRIM32 and 18,975 other genes that potentially influence its expression.  We further restricted our attention to the 5,000 genes with the largest variances in expression (on the log scale) and considered a three-term natural cubic spline basis expansion of those genes, resulting in a grouped regression problem with $n=120$ and $p=15,000$.  The models selected by group lasso, group MCP, and group SCAD are described in Table~\ref{Tab:scheetz}.

\ctable[
 caption={Genes selected by group lasso/SCAD/MCP, along with the Euclidean norm of the coefficients for each gene's basis expansion.},
 label=Tab:scheetz,
 pos=ht!]{llrrr}{}{
\toprule
 & & \multicolumn{3}{c}{Group norm} \\
 \cline{3-5}
{\bf Probe} & {\bf Gene}   & {\bf Group} & {\bf Group} & {\bf Group} \\
{\bf Set}   & {\bf Symbol} & {\bf Lasso} & {\bf MCP} & {\bf SCAD} \\
\midrule
  1374131\_at &  & 0.11 &  & 0.13 \\
  1383110\_at & Klhl24 & 0.10 &  & 0.08 \\
  1383749\_at & Phospho1 & 0.02 &  & 0.04 \\
  1376267\_at &  & 0.22 &  & 0.23 \\
  1377791\_at &  & 0.13 &  & 0.12 \\
  1376747\_at &  & 0.28 &  & 0.24 \\
  1390539\_at &  & 0.11 &  & 0.12 \\
  1384470\_at &  & 0.05 &  & 0.07 \\
  1386032\_at & Prkd3 & 0.03 &  & 0.06 \\
  1393231\_at & Ppp4r2 & 0.01 &  & 0.03 \\
  1385798\_at &  & 0.03 &  &  \\
  1383730\_at & Ttc9c & 0.02 &  & 0.06 \\
  1368476\_at & Nr3c2 & 0.05 &  & 0.01 \\
  1384860\_at & Zfp84 & 0.03 &  & 0.07 \\
  1372928\_at &  & 0.22 & 1.83 & 0.20 \\
  1381902\_at & Zfp292 & 0.16 &  & 0.18 \\
  1390574\_at &  & 0.02 &  & 0.01 \\
  1384940\_at & Zfp518a & 0.10 &  & 0.10 \\
\bottomrule}

This is an interesting case study in that group MCP selects a very different model from the other two approaches.  In particular, group lasso and group SCAD each select a fairly large number of genes, while shrinking each gene's group of coefficients nearly to zero.  Group MCP, on the other hand, selects a single gene and returns a fit nearly the same as the least-squares fit for that gene alone.  The relationship between probe set 1372928\_at and TRIM32 estimated by each model is plotted in Figure~\ref{Fig:scheetz}.

\begin{figure}
 \centering
 \includegraphics[width=0.75\linewidth]{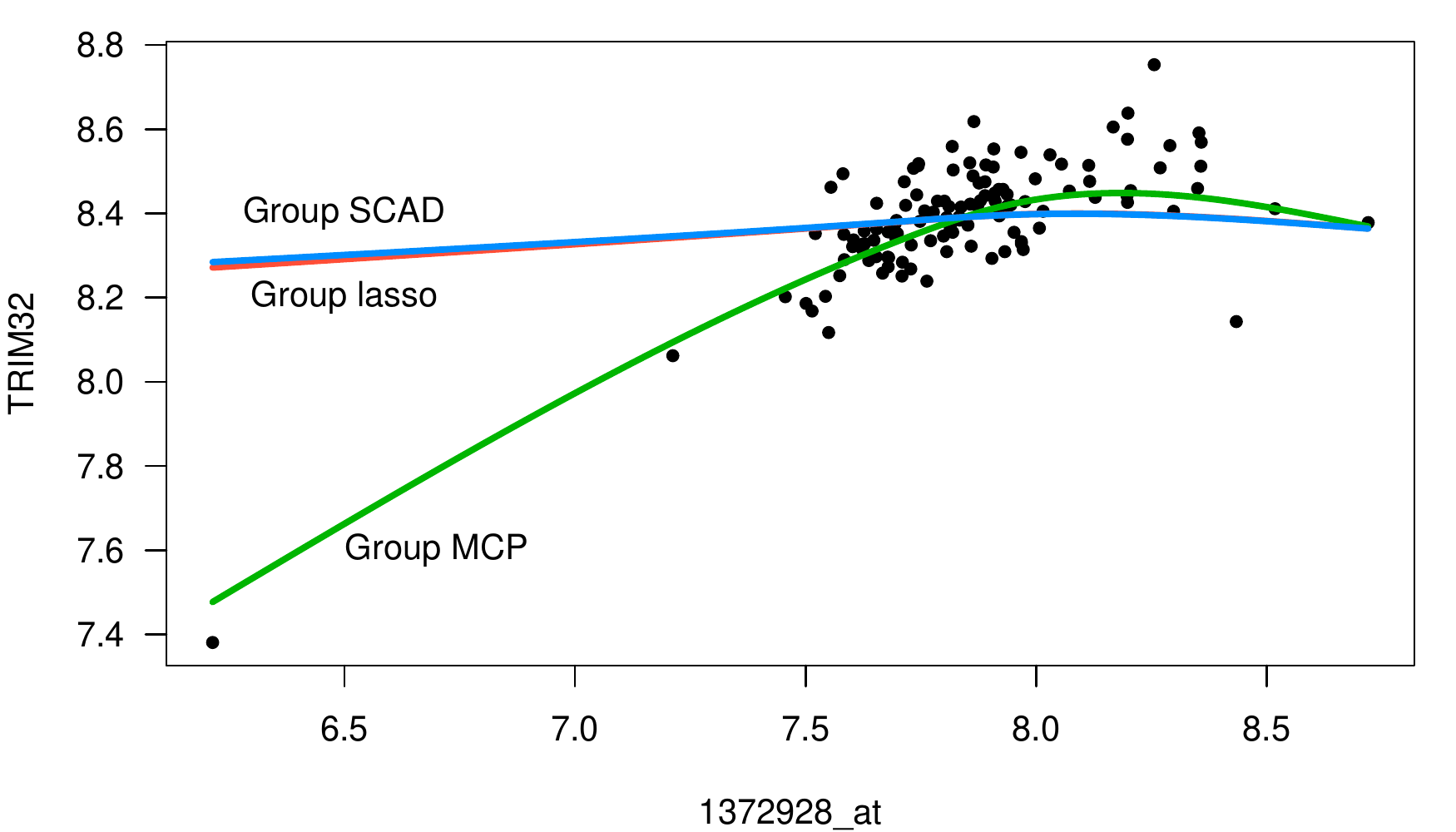}
 \caption{\label{Fig:scheetz} Estimated relationship between probe set 1372928\_at and TRIM32 estimated by group lasso, group MCP, and group SCAD.  Estimates are superimposed on top of a scatterplot and restricted to pass through the mean expression for each probe set.}
\end{figure}

The most important aspect of Figure~\ref{Fig:scheetz} to note is that the outcome, TRIM32, has a large outlying value almost 10 standard deviations below the mean of the rest of the points.  This observation has a large impact on the fit: for all of the genes in Table~\ref{Tab:scheetz}, this subject is also responsible for the lowest/highest or second-lowest/highest value of that gene, and the median absolute correlation for the genes in the table is 0.62.  Many of the scatterplots of those genes versus TRIM32 look qualitatively similar to the one in Figure~\ref{Fig:scheetz}.

Faced with this set of genes, group MCP selects a single gene and fits a model that explains 65\% of the variance in TRIM32 expression.  Group SCAD and group lasso select an ensemble of correlated genes, downweighting the contribution of each gene considerably.  Each approach has advantages, depending on the goal of the analysis.  The group SCAD/lasso approaches avoid a possibly arbitrary selection of one gene from a highly correlated set, and produce a model with somewhat better predictive ability (root cross-validation error of $0.092 \pm 0.04$ versus $0.099 \pm 0.04$), although all three approaches are within random variability of each other.  Group MCP, on the other hand, produces a highly parsimonious model capable of predicting just as well as the group lasso/SCAD models despite using only a single gene.  This is potentially a valuable property if the goal is, say, to develop a diagnostic assay and each gene that need to be measured adds cost to the assay.


Finally, it should be noted that although group MCP is behaving in a rather greedy fashion in this example, this is not an inherent aspect of the method.  By adjusting the $\gamma$ parameter, group MCP can be made to resemble the group lasso and group SCAD solutions as well -- recall that group lasso can be considered a special case of group MCP with $\gamma=\infty$.  Group SCAD, on the other hand, cannot be made to resemble group MCP and is incapable in this case of selecting a highly parsimonious model.  Group MCP is considerably more flexible, although of course to pursue this flexibility, proper selection of tuning parameters becomes an important issue.  The selection of additional tuning parameters is an important area for further study in both the grouped and non-grouped application of the MC penalty.

\subsection{Age-related macular degeneration genetic association study}
\label{Sec:gas}

We analyze data from a case-control study of age-related macular degeneration consisting of 400 cases and 400 controls.  We confine our analysis to 532 SNPs that previous biological studies have suggested may be related to the disease.  As in Section~\ref{Sec:sim-cat}, we represent each SNP as a three-level factor depending on the number of minor alleles the individual carries at that locus.  This requires the expansion of each SNP into a group of two indicator functions; our design matrix for this example thus has $n=800$ and $p=1,064$.  Group regularized logistic regression models were then fit to the data; as before, $\lam$ was selected via cross-validation.  The results are presented in Table~\ref{Tab:mullins}.

\ctable[
 caption={Application of group regularized methods to age-related macular degeneration study.  The number of SNPs selected by the method as well as the cross-validated misclassification error are reported.  For comparison, the intercept-only model is also listed (``Baseline'').},
 label=Tab:mullins,
 pos=ht!]{lcc}{
 \tnote{$\SE = 0.02$ for all models}}{
\toprule
 & & {\bf Misclassification} \\
 & {\bf SNPs} & {\bf Error\tmark} \\
\midrule
Baseline &    & 0.50 \\ 
grLasso & 51 & 0.41 \\ 
grMCP   & 12 & 0.39 \\ 
grSCAD  & 15 & 0.41 \LL}

We make the following remarks on Table~\ref{Tab:mullins}: (1) All three approaches represent significant improvements over the baseline (intercept-only) misclassification error.  (2) The group MCP model is slightly superior to the group lasso and group SCAD models, although again, the difference between the three models in terms of predictive accuracy is comparable to the SE.  (3) The group MCP and group SCAD approaches produce considerably more parsimonious models without a loss in prediction accuracy.  Compared with the gene expression example of the previous section, parsimony is both more desirable and more believable in this example.  The SNPs selected here are not highly correlated
and thus more likely to represent independent causes than dependent manifestations of a separate underlying cause such as up-regulation of a pathway.

\section{Conclusion}

Group MCP and group SCAD models are powerful alternatives to the group lasso in problems involving grouped variable selection.  However, application and study of these approaches has been limited, especially in high-dimensional problems, due to a lack of efficient algorithms and a lack of publicly available software for fitting these models.  In this article, we attempt to remedy this, describing the development of efficient algorithms and proving an implementation via the \texttt{R} package \texttt{grpreg}.

\section*{Acknowledgments}

Jian Huang's work is supported in part by NIH Grants R01CA120988, R01CA142774 and NSF Grants DMS-0805670 and DMS-1208225.  The authors would like to thank Rob Mullins for the genetic association data analyzed in Section~\ref{Sec:gas}, as well as the associate editor and two anonymous reviewers, who provided many helpful remarks that led to considerable refinement of this article.

\section*{Appendix}

Before proving Proposition~\ref{Prop:linear}, we establish the groupwise convexity of all the objective functions under consideration.  Note that for group SCAD and group MCP, although they contain nonconvex components and are not necessarily convex overall, the objective functions are still convex with respect to the variables in a single group.

\begin{lemma}
\label{Lemma:convex-lin}
The objective function $Q(\bb_j)$ for regularized linear regression is a strictly convex function with respect to $\bb_j$ for the group lasso, for group SCAD with $\gamma > 2$, and for group MCP with $\gamma > 1$.
\end{lemma}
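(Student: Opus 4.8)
The plan is to reduce the claim to a one–dimensional fact about the radial profile of $Q$ and then exploit a convex/nonconvex splitting of each penalty. First I would fix all groups other than $j$ and impose the orthonormalization $n^{-1}\X_j^T\X_j=\I$ from Section~\ref{Sec:ortho}. Expanding the least–squares loss, its dependence on $\bb_j$ is exactly $\tfrac12\norm{\bb_j}^2$ plus an affine term; since affine terms do not affect strict convexity, it suffices to show that
\[
 f(\bb_j)\;:=\;\tfrac12\norm{\bb_j}^2 + p_{\lam,\gamma}\!\big(\norm{\bb_j}\big)
\]
is strictly convex in $\bb_j$ (with $p_{\lam,\gamma}(\cdot)$ replaced by $\lam_j\norm{\cdot}$ in the group-lasso case).

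Next I would split $f$ as a strictly convex quadratic plus a ``tame'' radial term. For the group lasso this is immediate: $f(\bb_j)=\tfrac12\norm{\bb_j}^2+\lam_j\norm{\bb_j}$ is a strictly convex quadratic plus a norm. For group MCP, absorbing the concave quadratic piece $-\norm{\bb_j}^2/(2\gamma)$ of the penalty into the quadratic term gives
\[
 f(\bb_j)=\tfrac12\Big(1-\tfrac1\gamma\Big)\norm{\bb_j}^2 + h\!\big(\norm{\bb_j}\big),\qquad h(\theta):=p_{\lam,\gamma}(\theta)+\tfrac{\theta^2}{2\gamma},
\]
and for group SCAD, absorbing $-\norm{\bb_j}^2/\big(2(\gamma-1)\big)$ gives
\[
 f(\bb_j)=\tfrac{\gamma-2}{2(\gamma-1)}\norm{\bb_j}^2 + h_S\!\big(\norm{\bb_j}\big),\qquad h_S(\theta):=p_{\lam,\gamma}(\theta)+\tfrac{\theta^2}{2(\gamma-1)}.
\]
The leading quadratic is strictly convex precisely under the stated hypotheses ($\gamma>1$ for MCP, $\gamma>2$ for SCAD, and no condition for the lasso), so it remains only to check that $h$ and $h_S$ are convex and nondecreasing on $[0,\infty)$.

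This I would verify from their explicit piecewise forms: for MCP, $h(\theta)=\lam\theta$ on $[0,\gamma\lam]$ and $h(\theta)=\theta^2/(2\gamma)+\tfrac12\gamma\lam^2$ on $(\gamma\lam,\infty)$; for SCAD the three penalty pieces simplify so that $h_S$ is quadratic on $[0,\lam]$, linear on $[\lam,\gamma\lam]$, and quadratic on $(\gamma\lam,\infty)$. In each case a short computation shows the pieces agree in value and in first derivative at the relevant knots ($\theta=\gamma\lam$ for MCP; $\theta=\lam$ and $\theta=\gamma\lam$ for SCAD), each piece has nonnegative second derivative, and the first derivative is nonnegative throughout; hence $h$ and $h_S$ are $C^1$, convex, and nondecreasing. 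Since a convex nondecreasing function composed with the (convex) Euclidean norm is convex, $h(\norm{\bb_j})$ and $h_S(\norm{\bb_j})$ are convex, and adding the strictly convex quadratic term makes $f$ — hence $Q(\bb_j)$ — strictly convex.

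The routine part is the bookkeeping at the two knots. The one genuine subtlety — and the reason for the splitting — is that strict convexity of the scalar profile $\theta\mapsto\tfrac12\theta^2+p_{\lam,\gamma}(\theta)$ on $[0,\infty)$ is by itself \emph{not} enough to conclude that its composition with $\norm{\cdot}$ is convex: one also needs monotonicity. Peeling off precisely the concave quadratic of each penalty is what renders the residual radial term ($h$ or $h_S$) nondecreasing as well as convex, so that the composition rule applies, while the leftover quadratic supplies the strictness.
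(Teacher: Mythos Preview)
Your argument is correct. The paper takes a different, more computational route: it asserts that $Q(\bb_j)$ is directionally twice differentiable, lets $\xi_*$ be the infimum (over $\bb_j$ and directions $\d$) of the minimum eigenvalue of the directional Hessian, and after ``some algebra'' records $\xi_*=1$, $1-\tfrac{1}{\gamma-1}$, and $1-\tfrac{1}{\gamma}$ for group lasso, SCAD, and MCP respectively, concluding strict convexity from $\xi_*>0$. Your decomposition recovers exactly these constants as the coefficients of the leftover quadratic, so the two approaches agree numerically; the difference is in how the convexity is certified. The paper's version is terse and leaves the eigenvalue computation and the handling of the nondifferentiable point at $\bb_j=\zero$ implicit. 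Your splitting is more explicit and, in particular, sidesteps any differentiability concerns by appealing to the standard rule that a convex nondecreasing scalar function composed with a norm is convex; you also correctly flag that strict convexity of the radial profile alone would not have sufficed without monotonicity. Either route proves the lemma, but yours makes the mechanism (peeling off exactly the concave quadratic of the penalty) transparent.
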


\begin{proof} Although $Q(\bb_j)$ is not differentiable, it is directionally twice differentiable everywhere.  Let $\nabla_{\d}^2 Q(\bb_j)$ denote the second derivative of $Q(\bb_j)$ in the direction $d$.  Then the strict convexity of $Q(\bb_j)$ follows if $\nabla_{\d}^2 Q(\bb_j)$ is positive definite at all $\bb_j$ and for all $\d$.  Let $\xi_*$ denote the infimum over $\bb_j$ and $\d$ of the minimum eigenvalue of $\nabla_{\d}^2 Q(\bb_j)$.  Then, after some algebra, we obtain
\as{
  \xi_* &= 1 & \hspace{-1in} & \textrm{Group lasso}\\
  \xi_* &= 1 - \frac{1}{\gamma-1} & \hspace{-1in} & \textrm{Group SCAD}\\
  \xi_* &= 1 - \frac{1}{\gamma} & \hspace{-1in} & \textrm{Group MCP},
}
These quantities are positive under the conditions specified in the lemma.
\end{proof}

We now proceed to the proof of Proposition~\ref{Prop:linear}.

\begin{proof}[Proof of Proposition~\ref{Prop:linear}]
The descent property is a direct consequence of the fact that each updating step consists of minimizing $Q(\bb)$ with respect to $\bb_j$.  Lemma~\ref{Lemma:convex-lin}, along with the fact that the least squares loss function is continuously differentiable and coercive, provide sufficient conditions to apply Theorem 4.1 of \citet{Tseng2001}, thereby establishing that every limit point of $\{\bb^{(m)}\}$ is a stationary point of $Q(\bb)$.

We further note that $\{\bb^{(m)}\}$ is guaranteed to converge to a unique limit point.  Suppose that the sequence possessed two limit points, $\bb'$ and $\bb''$, such that for at least one group, $\bb'_j \neq \bb''_j$.  For the transition $\bb' \to \bb''$ to occur, the algorithm must pass through the point $(\bb''_j, \bb'_{-j})$.  However, by Lemma 1, $\bb'_j$ is the unique value minimizing $Q(\bb_j|\bbj)$.  Thus, $\bb' \to \bb''$ is not allowed by the group descent algorithm and $\{\bb^{(m)}\}$ possesses a single limit point.
\end{proof}

For Proposition~\ref{Prop:logistic}, involving logistic regression, we proceed similarly, letting $R(\bb|\tbb)$ denote the majorizing approximation to $Q(\bb)$ at $\tbb$.

\begin{lemma}
\label{Lemma:convex-log}
The majorizing approximation $R(\bb_j|\tbb)$ for regularized logistic regression is a strictly convex function with respect to $\bb_j$ at all $\tbb$ for the group lasso, for group SCAD with $\gamma > 5$, and for group MCP with $\gamma > 4$.
\end{lemma}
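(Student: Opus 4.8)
The plan is to follow the same route as the proof of Lemma~\ref{Lemma:convex-lin}, simply replacing the least-squares loss by its quadratic majorizer and recomputing the infimal curvature. Writing $R(\bb|\tbb)$ in terms of $\bb_j$, the majorized loss contributes the quadratic term $\frac{v}{2n}(\yy-\X\bb)^T(\yy-\X\bb)$, whose Hessian with respect to $\bb_j$ is $\frac{v}{n}\X_j^T\X_j = v\I$ by orthonormalization, with $v=1/4$ for logistic regression. The penalty contributes $p_{\lam,\gamma}(\norm{\bb_j})$, which (like the least-squares objective in Lemma~\ref{Lemma:convex-lin}) is not differentiable at $\bb_j=\zero$ but is directionally twice differentiable everywhere. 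Hence $R(\bb_j|\tbb)$ is directionally twice differentiable, and, exactly as before, strict convexity follows once $\xi_*$ — the infimum over $\bb_j$ and over directions $\d$ of the minimum eigenvalue of $\nabla_{\d}^2 R(\bb_j|\tbb)$ — is shown to be strictly positive.

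First I would bound the penalty contribution at points $\bb_j\neq\zero$. Since $p_{\lam,\gamma}(\norm{\bb_j})$ is a function of the norm, its Hessian at such a point has eigenvalue $p''_{\lam,\gamma}(\norm{\bb_j})$ in the radial direction $\bb_j/\norm{\bb_j}$ and eigenvalue $p'_{\lam,\gamma}(\norm{\bb_j})/\norm{\bb_j}$, with multiplicity $K_j-1$, in the tangential directions. Because SCAD and MCP are nondecreasing on $[0,\infty)$, the tangential eigenvalue is nonnegative, so the minimum eigenvalue of the penalty's contribution is at least $\inf_{\theta>0} p''_{\lam,\gamma}(\theta)$, which from \eqref{eq:scad} and \eqref{eq:mcp} equals $-1/(\gamma-1)$ for SCAD, $-1/\gamma$ for MCP, and $0$ for the group lasso. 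Adding the loss contribution $v\I=\tfrac14\I$, the same algebra as in Lemma~\ref{Lemma:convex-lin} gives $\xi_* = \tfrac14 - \tfrac{1}{\gamma-1}$ for group SCAD, $\xi_* = \tfrac14 - \tfrac{1}{\gamma}$ for group MCP, and $\xi_* = \tfrac14$ for the group lasso, which are positive precisely under the stated conditions $\gamma>5$, $\gamma>4$, and no condition, respectively.

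It then remains to verify that the point $\bb_j=\zero$ does not violate this bound on $\xi_*$. There I would argue that along any direction $\d$ the penalty reduces to $p_{\lam,\gamma}(t\norm{\d})$ as a function of the step length $t\ge 0$, whose one-sided second derivative at $t=0^+$ is $p''_{\lam,\gamma}(0)\norm{\d}^2$ with $p''_{\lam,\gamma}(0)=-1/\gamma$ for MCP and $0$ for SCAD and the lasso, while the first-order kink contributes only nonnegatively; so the origin is no worse than the $\norm{\bb_j}\to 0$ limit already accounted for. I expect this bit of care at the nonsmooth point — being precise that the tangential and first-order terms never lower the curvature, and that the directional second derivative behaves well enough to feed the downstream convergence argument (the analogue, for logistic regression, of the role Lemma~\ref{Lemma:convex-lin} plays in the proof of Proposition~\ref{Prop:linear}) — to be the only delicate part; the eigenvalue computation itself is routine.
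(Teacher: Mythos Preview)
Your proposal is correct and follows essentially the same approach as the paper: replace the loss Hessian $\I$ from Lemma~\ref{Lemma:convex-lin} by $v\I=\tfrac14\I$, keep the same penalty-curvature bounds, and read off $\xi_*$. In fact you supply considerably more detail than the paper's proof, which simply writes ``proceeding as in the previous lemma'' and lists the resulting $\xi_*$ values; your radial/tangential eigenvalue decomposition and your separate treatment of the nonsmooth point $\bb_j=\zero$ make explicit what the paper leaves to the reader.
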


\begin{proof} Proceeding as in the previous lemma, and letting $\xi_*$ denote the infimum over $\tbb$, $\bb_j$ and $\d$ of the minimum eigenvalue of $\nabla_{\d}^2 Q(\bb_j)$, we obtain
\as{
  \xi_* &= \frac{1}{4} & \hspace{-1in} & \textrm{Group lasso}\\
  \xi_* &= \frac{1}{4} - \frac{1}{\gamma-1} & \hspace{-1in} & \textrm{Group SCAD}\\
  \xi_* &= \frac{1}{4} - \frac{1}{\gamma} & \hspace{-1in} & \textrm{Group MCP},
}
These quantities are positive under the conditions specified in the lemma.
\end{proof}

\begin{proof}[Proof of Proposition~\ref{Prop:logistic}]
The proposition makes two claims: descent with every iteration and convergence to a stationary point.  To establish descent for logistic regression, we note that because $L$ is twice differentiable, for any point $\be$ there exists a vector $\be^{**}$ on the line segment joining $\be$ and $\be^*$ such that
\as{
  L(\be) &= L(\be^*) + (\be-\be^*)^T \nabla L(\be^*) + \frac{1}{2} (\be-\be^*)^T \nabla^2 L(\be^{**}) (\be-\be^*)\\
         &\leq \tilde{L}(\be|\be^*)
}
where the inequality follows from the fact that $v\I - \nabla^2 L(\be^{**})$ is a positive semidefinite matrix.  Descent now follows from the descent property of MM algorithms \citep{Lange2000} coupled with the fact that each updating step consists of minimizing $R(\bb_j|\tbb)$.

To establish convergence to a stationary point, we note that if no elements of $\bb$ tend to $\pm\infty$, then the descent property of the algorithm ensures that the sequence $\bb^{(k)}$ stays within a compact set and therefore possesses a limit point $\tbb$.  Then, as in the proof of Proposition~\ref{Prop:linear}, Lemma~\ref{Lemma:convex-log} allows us to apply the results of \citet{Tseng2001} and conclude that $\tbb$ must be a stationary point of $R(\bb|\tbb)$.  Furthermore, because $R(\bb|\tbb)$ is tangent to $Q(\bb)$ at $\tbb$, $\tbb$ must also be a stationary point of $Q(\bb)$.
\end{proof}

\bibliographystyle{ims-nourl}

\end{document}